\let\latexdocument\document
\let\latexarabic\arabic

\documentclass[article,lineno]{biometrika}

\let\document\latexdocument
\let\arabic\latexarabic

\usepackage{setspace}

\makeatletter

\copyrightinfo{}  

\def\printhistory{}

\let\biom@origps@plain\ps@plain
\def\ps@plain{%
  \biom@origps@plain
  \def\@oddhead{\smash{\vbox to 6pt{%
    \hsize\textwidth
    \slugfont
    \hfill\thepage\hfill
  }}}
  \def\@evenfoot{}%
  \def\@oddfoot{}
  \let\@evenhead\@oddhead
}

\makeatother

\usepackage{graphicx}%
\usepackage{multirow}%
\usepackage{amsmath,amssymb,amsfonts}%
\usepackage[title]{appendix}%
\usepackage{xcolor}%
\usepackage{textcomp}%
\usepackage{manyfoot}%
\usepackage{booktabs}%
\usepackage{listings}%
\usepackage{url}
\usepackage [latin1]{inputenc}

\usepackage{newtxtext}
\usepackage[subscriptcorrection]{newtxmath}

\graphicspath{{./art/}}

\usepackage[plain,noend]{algorithm2e}

\makeatletter
\renewcommand{\algocf@captiontext}[2]{#1\algocf@typo. \AlCapFnt{}#2} 
\def\@algocf@capt@plain{top}
\renewcommand{\algocf@makecaption}[2]{%
  \addtolength{\hsize}{\algomargin}%
  \sbox\@tempboxa{\algocf@captiontext{#1}{#2}}%
  \ifdim\wd\@tempboxa >\hsize
    \hskip .5\algomargin%
    \parbox[t]{\hsize}{\algocf@captiontext{#1}{#2}}
  \else%
    \global\@minipagefalse%
    \hbox to\hsize{\box\@tempboxa}
  \fi%
  \addtolength{\hsize}{-\algomargin}%
}
\makeatother

\begin{document}

\jname{}
\jyear{}
\jvol{}
\jnum{}
\cyear{}
\accessdate{}

\received{}
\revised{}

\markboth{E.G. Crenshaw et~al.}{}

\title{Covariate Selection for Joint Latent Space Modeling of Sparse Network Data}

\author{E.G. Crenshaw}
\affil{Department of Biostatistics, Harvard TH Chan School of Public Health,\\ 677 Huntington Avenue, Boston, MA 02115.
\email{emma\_crenshaw@g.harvard.edu}}

\author{Yuhua Zhang}
\affil{Department of Biostatistics, Harvard TH Chan School of Public Health,\\ 677 Huntington Avenue, Boston, MA 02115. \email{yuhuazhang@hsph.harvard.edu}}

\author{Jukka-Pekka Onnela}
\affil{Department of Biostatistics, Harvard TH Chan School of Public Health,\\ 677 Huntington Avenue, Boston, MA 02115.
\email{onnela@hsph.harvard.edu}}

\maketitle

\raggedbottom



\begin{abstract}
    Network data are increasingly common in the social sciences and infectious disease epidemiology. Analyses often link network structure to node-level covariates, but existing methods falter with sparse networks and high-dimensional node features. We propose a joint latent space modeling framework for sparse networks with high-dimensional binary node covariates that performs covariate selection while accounting for uncertainty in estimated latent positions. Building on joint latent space models that couple edges and node variables through shared latent positions, we introduce a group lasso screening step and incorporate a measurement-error-aware stabilization term to mitigate bias from using estimated latent positions as predictors. We establish prediction error rates for the covariate component both when latent positions are treated as observed and when they are estimated with bounded error; under uniform control across $q$ covariates and $n$ nodes, the rate is of order $O(\log q / n)$ up to an additional term due to latent position estimation error. Our method addresses three challenges: (1) incorporating information from isolated nodes, which are common in sparse networks but often ignored; (2) selecting relevant covariates from high-dimensional spaces; and (3) accounting for uncertainty in estimated latent positions. Simulations show predictive performance remains stable as covariate sparsity grows, while naive approaches degrade. We illustrate how the method can support efficient study design using household social networks from 75 Indian villages, where an emulated pilot study screens a large covariate battery and substantially reduces required subsequent data collection without sacrificing network predictive accuracy.
\end{abstract}

\keywords{
    Latent Space Models; Network Modeling; Sparse Networks; Group lasso; Measurement Error;
}

\newpage

\section{Introduction}
\label{s:intro}
Network data arise throughout the social sciences and infectious disease epidemiology, where the scientific goal is often to relate patterns of interaction to node-level attributes. In addition to the complex dependencies inherent in graph data, two practical features complicate analysis: edge sparsity and high-dimensional node covariates. First, networks can be sparse with many isolated or low-degree nodes (e.g., contact and sexual networks), so individual ties carry disproportionate information and a substantial fraction of nodes contribute little edge information \citep{Weiss2020EgocentricStudy}. Second, investigators frequently collect a large battery of nodal measurements, only a subset of which are plausibly related to the social or epidemiological mechanisms that shape the network. In such settings, methods that incorporate covariates indiscriminately can dilute signal, increase computational burden, and hinder interpretability.

Latent space models provide a flexible and interpretable framework for representing network structure by positing that nodes occupy positions in a low-dimensional latent space, and that the propensity for tie formation is driven by similarity in those latent positions \citep{Hoff2002LatentAnalysis,Hoff2005BilinearData}. Conditional on latent positions, ties are modeled as independent, enabling likelihood-based estimation and inference while capturing prominent network features such as homophily, transitivity, and community structure through geometry \citep{Hoff2002LatentAnalysis,Smith2019TheData,Boguna2021NetworkGeometry}. A broad literature studies estimation and inference for latent position and related latent-structure random graph models \citep{Athreya2018StatisticalSurvey,Athreya2021OnGraphs,Hoff2002LatentAnalysis,Hoff2005BilinearData}. These models are increasingly used as a first-stage representation for downstream tasks, making reliable estimation of latent positions important for subsequent inference \citep{Athreya2021OnGraphs}.

In many empirical studies, rich node covariates are observed alongside the network. A common strategy is to incorporate covariates through dyadic similarity or edge-level covariates \citep{Hoff2002LatentAnalysis,Hoff2005BilinearData,Ma2020UniversalCovariates,Krivitsky2009RepresentingModels,Ma2022High-dimensionalLoss}, but this typically requires selecting a similarity metric and can be less informative for isolated nodes because edge-based likelihood contributions vanish. Approaches that fit latent space models with edge covariates can improve scalability and predictive performance, but are not designed for node covariate selection among high-dimensional node attributes and do not directly address the fact that the latent positions used as predictors are themselves estimated \citep{Ma2020UniversalCovariates}. More recently, \citet{Zhang2022JointVariables} introduced a joint latent space framework in which both the network and high-dimensional node variables are generated conditionally on shared latent positions. This joint formulation is particularly appealing for sparse networks because it leverages node variables even when nodes are isolated, thereby allowing all nodes to contribute to estimation.

Despite this progress, two gaps remain for practice. First, joint modeling alone does not resolve the common scenario in which $q$ node variables are collected, but only a small subset is meaningfully associated with the network-generating latent structure. In high dimensions, retaining many irrelevant covariates can worsen interpretability and may degrade predictive performance for the node covariate component. Second, any selection procedure that treats estimated latent positions $\widehat Z$ as fixed predictors risks overstating evidence: latent positions are estimated from the same data and may be noisy in sparse networks, creating an error-in-variables problem for covariate models built on $\widehat Z$.

We address these challenges by introducing a covariate selection procedure for joint latent space models that combines group-sparse regularization with a correction for latent position estimation error. Specifically, we treat each node covariate as a group corresponding to the $k$ coefficients linking that covariate to the $k$-dimensional latent space, and we apply a group lasso penalty for selection in high-dimensional logistic regression \citep{Meier2008TheRegression,Negahban2012ARegularizers}. To mitigate the impact of using estimated latent positions, we augment the group lasso objective with a stabilizing ridge-type term motivated by measurement-error-aware regularization for high-dimensional generalized linear models \citep{Srensen2018CovariateError,Srensen2015MeasurementCorrection}. Our overall procedure is a two-stage workflow: (i) joint latent space estimation using the full set of covariates following \citet{Zhang2022JointVariables}, and (ii) group-lasso screening (with measurement-error correction) followed by refitting the joint model on the selected covariate subset.

This paper makes three contributions. First, we propose a practically motivated covariate selection strategy for joint latent space modeling that is tailored to sparse networks and isolates. Second, we adapt measurement-error-aware penalization to the setting where latent positions are estimated and then used as predictors for high-dimensional node attributes \citep{Srensen2018CovariateError,Srensen2015MeasurementCorrection}. Third, we provide prediction error guarantees for the covariate component, both in an oracle setting where latent positions are treated as observed and in a setting where latent positions are estimated with bounded error; under uniform control across $q$ covariates and $n$ nodes, we obtain rates of order $O(\log q/n)$ under standard high-dimensional conditions \citep{Negahban2012ARegularizers,Meier2008TheRegression}. Finally, we illustrate an applied use case motivated by study design: an emulated pilot study that identifies a parsimonious covariate battery prior to full network data collection, demonstrated using social networks from 75 Indian villages \citep{Banerjee2013TheMicrofinance}.

\section{Model}
\label{s:model}
\subsection{Model Set Up and Notation}
Let $\mathcal{G} = (\mathcal{V},\mathcal{E})$ be an undirected network of $n$ nodes, with a symmetric adjacency matrix $A \in \mathbb{R}^{n \times n}$ such that $A_{ii', i \neq i'} = 1$ if the edge exists and 0 otherwise. For each node $i$, we observe $q$ binary covariates $Y \in \{0,1\}^{n \times q}$. We adopt a joint latent space model for the network data and node covariate data where each node $i$ is associated with an unobserved latent position $Z_i \in \mathbb{R}^k$ in a $k$-dimensional latent space. We assume both network edges and node covariates are generated conditional on these latent positions, creating a natural dependency structure between network topology and nodal attributes.

We assume that edges between nodes are independent conditional on their latent positions \citep{Ma2020UniversalCovariates, Zhang2022JointVariables, Hoff2002LatentAnalysis} such that
\[
    A_{ii'} = A_{i'i} \sim \text{Bernoulli}(P^A_{ii'})\:, \quad \text{logit} P^A_{ii'} = \theta^A_{ii'} = \alpha_i + \alpha_{i'}+ Z_i^\top Z_{i'}\;,
\]
where $P^A_{ii'}$ is the probability of an edge between nodes $i$ and $i'$ and $\alpha = (\alpha_1, \dots, \alpha_n)$ is a node-specific `sociability' factor that captures the propensity of nodes to form edges independent of their latent position. This formulation allows homophily, i.e., nodes with similar latent positions are more likely to connect, and degree heterogeneity.

For binary covariates, we assume that, conditional on $Z$, each covariate $Y_j$ for $j \in \{1, \dots, q\}$ follows an independent logistic regression
\[Y_{ij} \sim \text{Bernoulli}(P^Y_{ij})\;, \quad \text{logit} P^Y_{ij} = \theta^Y_{ij} = \gamma_j + Z_i\beta_{\cdot j}\;,\]
where $P^Y_{ij}$ is the probability that covariate $j$ for node $i$ is equal to 1, $\gamma \in \mathbb{R}^q$ is the vector of intercept terms, and $\beta \in \mathbb{R}^{k \times q}$ are the regression coefficients linking latent positions to node covariates.

Crucially, we assume that $\beta$ exhibits group sparsity: many columns of $\beta$ are zero, indicating that the corresponding covariate is not associated with the network-generating latent space. We define the active set  of $\beta$ as $S = \{j \in \{1, \dots, q\}: \|\beta_{\cdot j}\|_2 > 0\}$ with cardinality $s = |S|$. This assumption reflects that, while many covariates may be collected, only a subset of covariates are truly relevant to the network structure.

\subsection{Estimation}
Our estimation strategy proceeds in two stages: (1) initial joint estimation of all parameters using the full data, and (2) covariate selection followed by refined estimation using only selected covariates.

\vspace{0.5cm}

\textbf{Initial Estimation}

For identifiability, we center the latent variables by subtracting the row means and assume that $\frac{1}{n} Z^\top Z$ is diagonal but not identity. Thus, $Z$ is identifiable up to orthogonal transformations.

The joint estimation objective combines the negative log-likelihoods for both network and covariate data such that $L(Z, \alpha, \gamma, \beta) = L_A + \Lambda L_Y\;,$ where: 
\[L_A = -\log P(A|Z, \alpha) = \sum_{1\leq i < i' \leq n} \{A_{ii'} \theta^A_{ii'} - \log(1 + \exp(\theta^A_{ii'}))\}\;,\]
\[L_Y= -\log P(Y|\gamma, \beta) = \sum_{1\leq i \leq n, 1 \leq j \leq q} \{Y_{ij} \theta^Y_{ij} - \log(1 + \exp(\theta^Y_{ij}))\}\;.\]
The weight parameter $\Lambda \geq 0$ balances the information from network structure and covariates and is a hyperparameter. In practice, we use a small value, 0.1, to allow the network information greater importance during estimation. Projected gradient descent can be performed with ADAM \citep{Kingma2015Adam:Optimization} or Adagrad\citep{Duchi2011AdaptiveOptimization}.

Algorithm \ref{alg:stage1} presents a summarized algorithm. We use uninformative initializations: $Z^0 \sim N(0,1)$, $\alpha^0 \sim \text{Uniform}(-1,1)$, and $\beta^0$ and $\gamma^0$ as the output from a logistic regression using $Y$ and $Z^0$. Additionally, we require several hyperparameters. There are learning rates specific to $Z$ and $\alpha$ at each projected gradient descent step. As recommended in \cite{Zhang2022JointVariables}, we let $\eta_Z = \eta_t/ \|Z^{t-1}\|^2_F$ and $\eta_\alpha = \eta_t/(2n)$, depending on the step-size specific learning rate $\eta_t$, which is allowed to decay exponentially using cosine annealing. Given the exponential decay, we recommend a relatively large initial $\eta_t$. We allowed the algorithm to stop early if the mean log loss for $A$ and $Y$ had changed no more than $10^{-6}$ for 500 iterations.

\vspace{0.5cm}

\textbf{Covariate Selection}

Given the initial estimates $\hat{Z}$, $\hat{\alpha}$, $\hat{\beta}$, and $\hat{\gamma}$, each column $\beta_{\cdot j}$ is estimated as a single group of size $k$ via group lasso which accounts for estimation error in $\hat{Z}$. Given that each $Y_j$ is assumed to be independent conditional on $Z$, this becomes $q$ independent logistic regressions. Treating each column as a single group ensures that the estimates for each regression coefficient are penalized, inducing sparsity at the covariate level. 

For fixed and known latent positions, the group lasso estimator for each of the $q$ covariates would be 
\[\widehat{\beta}_{\cdot j} = \arg\min_{\beta_{\cdot j}} \left\{L_{Y_j}(\beta) + \lambda \sqrt{k} \|\beta_{\cdot j}\|_2 \right\}\;,
\]
However, because $Z$ must be estimated, we incorporate a correction for measurement error: 
\[
\widehat{\beta}_{\cdot j} = \arg\min_{\beta_{\cdot j}} \left\{L_{Y_j}(\beta) + \lambda  \sqrt{k} \|\beta_{\cdot j}\|_2 + \delta \sqrt{k} \|\beta_{\cdot j}\|_2^2 \right\}\;,
\]
This additional ridge-type penalty stabilizes the selection procedure in the presence of measurement error, as shown by \citet{Srensen2018CovariateError}. This selection procedure identifies the active set $\hat{S} = \{j \in \{1, \dots, q\} : \|\widehat{\beta}_{\cdot j}\| > \tau\}$ for an appropriate threshold $\tau$. Finally, we re-estimate all parameters using only the covariates in $\hat{S}$, yielding refined estimates that leverage the increased signal-to-noise ratio from covariate selection. 

A summarized version of the algorithm used can be found in Algorithm \ref{alg:stage2}. In addition to the hyperparameters used in Algorithm \ref{alg:stage1}, we also specify $\tau$ as a number very close to 0, a grid of candidate lasso penalties, $\lambda$, and a grid of candidate $\delta$ penalties. We let candidate $\lambda$ values be a grid of log-linear points including $\sqrt{k/n}$, and we let candidate $\delta$ values be a grid of points between 0 and 0.5 following the method of \citep{Srensen2018CovariateError}. The final value of $\lambda$ was chosen as the value which minimized the AIC but still retained some variables to balance the desire for parsimony without being overly aggressive in variable selection. Given that this is a nonconvex optimization, the result of the fitting algorithm was taken after the estimate had stabilized for 500 estimation steps (no change greater than $10^{-6}$) or as the result with the best per-parameter loss, calculated as $\frac{1}{n(n-1)}\text{log loss}(A) + \Lambda \frac{1}{nq} \text{log loss}(Y)$. The value of $\delta$ was chosen as that which, for the chosen $\lambda$, minimizes the log loss for $Y$.

\begin{algorithm}[htbp]
\caption{Joint latent space estimator}\label{alg:stage1}
\textbf{Input: }network $A\in\mathbb{R}^{n\times n}$; node covariates $Y\in\mathbb{R}^{n\times q}$; latent dimension $k$; initial $(Z^0,\alpha^0,\beta^0,\gamma^0)$; weighting parameter $\Lambda$; global step size $\eta$; maximum number of iterations $T$.
\BlankLine

\For{$t=1,\dots,T$}{
  Compute cosine annealed step size $\eta_t$ and set $\eta_z=\eta_t/\lVert Z^{t-1}\rVert_F^2$, $\eta_\alpha=\eta_t/(2n)$\;
  \emph{($Z,\alpha$ step)}: perform one optimization step on $(Z,\alpha)$ with step sizes $(\eta_z,\eta_\alpha)$\;
  $Z^t\leftarrow JZ^t$\;
  $\alpha^t \leftarrow \alpha^t - \bar{\alpha}^t\mathbf{1}_n$\;

  \For{$j = 1,\dots,q$}{
    Fit a logistic regression of $Y_{\cdot j}$ on $Z^t$ to obtain $(\gamma_j^t,\beta_{\cdot j}^t)$\;
  }
}

\textbf{Output: }estimates $(\widehat Z,\widehat \alpha,\widehat \beta,\widehat\gamma)$.
\end{algorithm}

\begin{algorithm}[htbp]
\caption{Lasso and ME--aware refinement to joint latent space estimators}\label{alg:stage2}
\textbf{Input: }Network adjacency matrix $A\in \mathbb{R}^{n\times n}$; node covariates $Y\in \mathbb{R}^{n\times q}$; latent dimension $k$; estimated $Z^{(1)}$ from Algorithm~\ref{alg:stage1}; weighting parameter $\Lambda$; global step size $\eta$; tolerance level $\tau$; iterations to perform lasso, $T$; grid of candidate lasso penalties $\lambda$; grid of candidate $\delta$ penalties $\Delta$.
\BlankLine

\ForEach{$\lambda_\ell \in \lambda$}{
  Compute $(\gamma^{\text{lasso}}(\lambda_\ell), \beta^{\text{lasso}}(\lambda_\ell))$ via column--wise group logistic lasso\;
  Calculate mean log-loss for $Y$, $\bar\ell_Y(\lambda_\ell)$, using $\gamma^{\text{lasso}}(\lambda_\ell), \beta^{\text{lasso}}(\lambda_\ell)$\;
}
$\lambda^* = \arg\min_{\lambda_\ell} \bar\ell_Y(\lambda_\ell)$ (given that some variables are still selected at $\lambda^*$)\;
\BlankLine

Compute $(\gamma^{\text{lasso}},\beta^{\text{lasso}})$ at $\lambda^*$\;
Remove columns $j$ with $\|\beta_{\cdot j}^{\text{lasso}}\|_2 < \tau$, yielding $Y_{\text{keep}}$, $\beta_{\text{keep}}$, $\gamma_{\text{keep}}$\;
\BlankLine

Using $Z^{(1)}$ and $(\gamma_{\text{keep}},\beta_{\text{keep}})$, run an iterative reweighting procedure over $\delta\in\Delta$ (see \cite{Srensen2018CovariateError})\;
Choose $(\gamma^{\text{GMUL}},\beta^{\text{GMUL}})$ that minimizes mean log-loss for $Y_{\text{keep}}$\;
\BlankLine

Starting from $(Z^{(1)},a^{(1)},\beta^{\text{GMUL}},\gamma^{\text{GMUL}})$ and using only $Y_{\text{keep}}$, rerun Algorithm~\ref{alg:stage1} for $T$ iterations, obtaining $(\widehat Z,\widehat a,\widehat \beta,\widehat\gamma)$\;

\textbf{Output: }estimates $(\widehat Z,\widehat a,\widehat \beta,\widehat\gamma)$.
\end{algorithm}

\section{Theoretical Results}
We establish the theoretical properties of our estimators under two scenarios: when latent positions are treated as observed without error (Section 3.1) and when accounting for estimation error (Section 3.2). We state the necessary assumptions and provide prediction error rates for each case.

\subsection{Perfectly observed $Z$}
First, we consider the idealized case where $Z$ is perfectly observed (known), which provides insight into the behavior of our selection procedure. Given that covariates are assumed to be independent, the problem reduces to group lasso for logistic regression. The group lasso estimator is $\widehat \beta_{\cdot j} \in \arg\min_{\beta_j \in \mathbb{R}^{k}} \Big\{ L_{Y_j}(\beta_{\cdot j}) + \tilde{\lambda}\sqrt{k}\, \|\beta_{\cdot j}\|_2 \Big\}$.

Given that $k$ is constant for all $\beta$, the degrees of freedom term can be absorbed into $\tilde{\lambda}$, yielding the familiar group lasso for logistic regression:
\[
\widehat \beta_{\cdot j} \in \arg\min_{\beta_j \in \mathbb{R}^{k}} \Big\{L_{Y_j}(\beta_{\cdot j}) + \lambda \|\beta_{\cdot j}\|_{2} \Big\}\;.
\]
We adopt the framework of \citet{Meier2008TheRegression} with modifications for our identifiability constraints.

\begin{assumption}[Bounded linear predictors]
\label{assumption1}
There exists $M_1 > 0$ such that $-M_1 < \Theta_{ii'}^A < M_1$ for $1 \leq i, i' \leq n$  and $M_2 > 0$ such that $-M_2 < \Theta_{j}^Y < M_2$ for $1 \leq j \leq q$. 
\end{assumption}

\begin{assumption}[Sparse Group Structure]
\label{assumption2}
Each column $\beta_{\cdot j}$ is treated as one group of size $k$ and the number of truly active covariates $s = |\{j : \|\beta_{\cdot j}\|_2 > 0\}|$ satisfies $s = o(n / \log q)$.
\end{assumption}

\begin{assumption}[Design Normalization]
\label{assumption3}
The empirical covariance $D = \tfrac{1}{n}Z^\top Z$ is diagonal (but not identity) with eigenvalues $0 < \nu_{\min}(D) \leq \nu_{\max}(D) < \infty$.
\end{assumption}

\begin{assumption}[Restricted Strong Convexity]
\label{assumption4}
Let the error between the estimated regression coefficients and the truth be $\widehat{\beta}_{\cdot j} - \beta_{\cdot j} = \Delta_j$. The notation $\langle \cdot,\cdot \rangle$ denotes an inner product. For each $j$, there exists $\kappa > 0$ such that for all $\Delta_j \in \mathbb{R}^k$,
\[
\frac{1}{n}\left(L_{Y_j}(\beta_{\cdot j} + \Delta_j; Z) - L_{Y_j}(\beta_{\cdot j}; Z) - \langle \nabla L_{Y_j}(\beta_{\cdot j}; Z), \Delta_j \rangle\right) \geq \frac{\kappa}{2}\|\Delta_j\|_2^2.
\]

\end{assumption}

\begin{assumption}[Score Concentration]
\label{assumption5}
For each $j$, with $\mu_{ij}(Z) = \text{logit}^{-1}(\gamma_j + Z_i^\top \beta_{\cdot j})$, we have
\[
\left\|\frac{1}{n}Z^\top\{Y_{\cdot j} - \mu_j(Z)\}\right\|_2 \leq C_0\sqrt{\frac{\log q}{n}}
\]
with high probability.
\end{assumption}

Assumption \ref{assumption1} is the same assumption used in \citet{Zhang2022JointVariables}; assumption \ref{assumption3} allows for greater identifiability of $Z$. Assumption \ref{assumption4} is a standard condition for high-dimensional M-estimators, ensuring that the loss function is strongly convex. The logistic loss is twice differentiable; thus, it is equivalent to a lower bound on the eigenvalues of the Hessian $\nabla^2L_{Y_j}(\beta_{\cdot j}, Z)$. It is well established by \citet{Negahban2012ARegularizers} and \citet{Meier2008TheRegression}. Assumption \ref{assumption5} follows from standard sub-exponential concentration \citep{Meier2008TheRegression}.

Define the prediction error $d^2(\widehat \eta_j, \eta_j) = \frac{1}{n}\|Z^\top(\widehat \beta_{\cdot j} - \beta_{\cdot j})\|_F^2$, $\eta_j = Z\beta_{\cdot j} $.

\begin{theorem}[Prediction Error Rate, Perfectly Observed $Z$]
\label{theorem1}
Under Assumptions \ref{assumption1}\textendash\ref{assumption5}: 

(a) For any fixed $j \in \{1,\dots,q\}$, with $\lambda_j = C\sqrt{k/n}$ for some constant $C$:
\[d^2(\widehat \eta_{\cdot j}, \eta_{\cdot j}) = O_p\Big(\frac{k}{n}\Big).\]

(b) For uniform control over all $j \in \{1,\dots,q\}$, with $\lambda = C\sqrt{\log q/n}$ for some constant $C$:
\[\max_{j=1,\dots,q} d^2(\widehat \eta_{\cdot j}, \eta_{\cdot j}) = O_p\Big(\frac{\log q}{n}\Big).\]
\end{theorem}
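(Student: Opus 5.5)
The plan is the standard Negahban et al.\ M-estimator argument, applied separately to each of the $q$ problems of dimension $k$. Assumption~\ref{assumption4} then plays the role of strong convexity on all of $\mathbb{R}^k$, so no cone condition is needed. Fix $j$, write $\Delta_j=\widehat\beta_{\cdot j}-\beta_{\cdot j}$, and work with the normalized loss $\frac1n L_{Y_j}$, so that $\lambda$ is on the per-observation scale. The intercept $\gamma_j$ is handled by profiling it out; because $Z$ is column-centred, the score in $\beta$ is exactly $-\frac1n Z^\top\{Y_{\cdot j}-\mu_j(Z)\}$.

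\textbf{Step 1 (basic inequality).} Optimality of $\widehat\beta_{\cdot j}$ gives
\[
\tfrac1n L_{Y_j}(\beta_{\cdot j}+\Delta_j)-\tfrac1n L_{Y_j}(\beta_{\cdot j})\le \lambda(\|\beta_{\cdot j}\|_2-\|\beta_{\cdot j}+\Delta_j\|_2)\le \lambda\|\Delta_j\|_2 .
\]
Subtract the linear term and apply Assumption~\ref{assumption4} on the left. Apply Cauchy--Schwarz to $\langle \frac1n\nabla L_{Y_j},\Delta_j\rangle$. This yields
\[
\tfrac{\kappa}{2}\|\Delta_j\|_2^2\le (\|\tfrac1n Z^\top(Y_{\cdot j}-\mu_j)\|_2+\lambda)\|\Delta_j\|_2 .
\]

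\textbf{Step 2 (choice of $\lambda$ and the score event).} Choose $\lambda\ge 2\|\frac1n Z^\top(Y_{\cdot j}-\mu_j)\|_2$. On that event, Step 1 gives $\|\Delta_j\|_2\le 3\lambda/\kappa$.

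For part (a), with $j$ fixed, the score is a sum of $n$ independent mean-zero bounded vectors in $\mathbb{R}^k$. Its coordinates have variance at most $\frac{1}{4n}D_{ll}$ by Assumption~\ref{assumption3}. Hence $E\|\frac1n Z^\top(Y_{\cdot j}-\mu_j)\|_2^2\le k\,\nu_{\max}(D)/(4n)$, and Markov's inequality puts the score below $C\sqrt{k/n}$ with probability $1-\epsilon$ for $C=C(\epsilon)$.

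For part (b), use Assumption~\ref{assumption5} together with a union bound over $j$. The assumption should be read as holding simultaneously for all $j$ with probability tending to one. Its per-$j$ tail is sub-exponential, so a constant of order $\sqrt{\log q}$ in the exponent absorbs the factor $q$. Take $\lambda=C\sqrt{\log q/n}$ with $C\ge 2C_0$. This gives $\max_j\|\Delta_j\|_2\le 3C\sqrt{\log q/n}/\kappa$ on a single high-probability event.

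\textbf{Step 3 (from coefficients to prediction error).} The prediction error is
\[
d^2=\tfrac1n\|Z\Delta_j\|_2^2=\Delta_j^\top D\Delta_j\le \nu_{\max}(D)\|\Delta_j\|_2^2 ,
\]
using Assumption~\ref{assumption3}. For (a) this is $O_p(k/n)$, and for (b), uniformly in $j$, it is $O_p(\log q/n)$. Assumption~\ref{assumption2} is used only to make these rates $o(1)$, so that the estimate stays where Assumption~\ref{assumption1} keeps the logistic curvature bounded below. Assumption~\ref{assumption1} in turn justifies $\kappa$ through $\mu(1-\mu)\ge e^{M_2}/(1+e^{M_2})^2$ times $\nu_{\min}(D)$.

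\textbf{Main obstacle.} The delicate step is the uniform score bound in (b). Assumption~\ref{assumption5} as written is per-$j$ "with high probability". To avoid extra $\log q$ or $k$ factors, I would prove it directly rather than cite it: apply a vector Bernstein (or Hoeffding) bound to each coordinate, using bounded $Y$ and $\max_i\|Z_i\|_2=O(1)$ from Assumption~\ref{assumption1}, and then take the union bound over $kq$ coordinates. This gives $\sqrt{k\log(kq)/n}\asymp\sqrt{\log q/n}$ for fixed $k$.

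A second subtlety is that Assumption~\ref{assumption4}, as stated globally, is only truly available locally for the logistic loss. If needed, I would localize with the usual convexity argument: the set $\{\|\Delta\|\le r\}$ is star-shaped, so scaling $\Delta$ onto its boundary yields a contradiction unless $\|\Delta_j\|_2\le r$.
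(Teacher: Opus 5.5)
Your argument is correct, and it is more self-contained than the paper's. The paper proves this theorem essentially by citation. Part (a) ``adapts'' the group-lasso logistic analysis of Meier et al.\ (2008) to one group of size $k$, takes the score bound from Assumption~5 ``with rate $\sqrt{k/n}$'', and passes to prediction error through the eigenvalues of $D$; part (b) appeals to ``standard union bound arguments''. You instead run the direct basic-inequality argument: optimality, curvature on $\mathbb{R}^k$ (so no cone condition), Cauchy--Schwarz on the score, and $d^2=\Delta_j^\top D\Delta_j$. This is exactly the paper's own proof of Theorem~2 with $U=0$ and $\delta=0$. What your version buys:
\begin{itemize}
\item In (a) you derive the $\sqrt{k/n}$ score rate by a second-moment bound, whereas Assumption~5 as stated only supplies $\sqrt{\log q/n}$.
\item In (b) your coordinatewise Hoeffding bound with a union over $kq$ coordinates gives a genuinely simultaneous event, and it needs only $\sum_i Z_{il}^2=nD_{ll}$. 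A union bound over a per-$j$ ``with high probability'' statement with unspecified tail, as in the paper, is not justified.
\item Your localization step repairs Assumption~4, with $\kappa$ obtained uniformly in $j$ from Assumptions~1 and~3. As stated globally, Assumption~4 cannot hold for the logistic loss: that loss is $1$-Lipschitz in the linear predictor, so the Bregman remainder grows only linearly in $\|\Delta_j\|_2$.
\end{itemize}
The Meier et al.\ machinery (compatibility condition, cone argument) matters only with several groups, or more than $n$ parameters, per regression. With one fixed-dimension group per covariate it goes unused.

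Two small points need tidying. Write $s_j=\frac1n Z^\top\{Y_{\cdot j}-\mu_j(Z)\}$.

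\textbf{The event $\lambda\ge 2\|s_j\|_2$.} In (a), with $C$ fixed, Chebyshev does not make $\{\lambda\ge 2\|s_j\|_2\}$ hold with probability tending to one. So taking $C=C(\epsilon)$ does not literally give $O_p$ for a single constant. You never need that event, though: Step~1 already gives $\|\Delta_j\|_2\le \frac{2}{\kappa}(\|s_j\|_2+\lambda)$. That is $O_p(\sqrt{k/n})$ for any fixed $C$, and $O_p(\sqrt{\log q/n})$ uniformly in (b). In fixed dimension the penalty need not dominate the score.

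\textbf{Profiling out $\gamma_j$.} Profiling does not make the score ``exactly'' $-s_j$. By the envelope theorem the score is evaluated at $\widehat\gamma_j(\beta_{\cdot j})$. Centring $Z$ does not cancel $\frac1n\sum_i Z_i\,\mu'(\xi_i)\{\widehat\gamma_j(\beta_{\cdot j})-\gamma_j\}$, because the weights $\mu'(\xi_i)$ vary with $i$. This extra term is of the same order as $s_j$ (also uniformly in $j$), so the rates survive. It is simpler to treat $\gamma_j$ as known, as the paper implicitly does, since its loss depends on $\beta_{\cdot j}$ only.
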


\citet{Meier2008TheRegression} states that $d^2(\hat{\eta}, \eta) = O_p(\frac{\log G}{n})$, where $G$ is the number of groups. For each separate optimization, we have a fixed group size of 1, so we recover the parametric convergence rate of $O_p(1/n)$. However, uniform control across $q$ columns requires the additional $\log q$ factor in the rate. The proof can be found in the appendix.

\subsection{Accounting for $Z$ with Measurement Error}
In practice, latent positions must be estimated, introducing additive measurement error: $\hat{Z} = Z + U$, where $Z$ are the oracle latent positions and $U$ is some unknown estimation error \citep{Srensen2018CovariateError, Srensen2015MeasurementCorrection}. The estimation-aware group lasso estimator for a single group is:
\begin{equation}
\widehat \beta_{\cdot j} \in \arg\min_{\beta_{\cdot j} \in \mathbb{R}^{k}} \Big\{ L_{Y_j}(\beta_{\cdot j}) +\ \lambda\,\|\beta_{\cdot j}\|_{2}\ +\ \frac{\delta}{2}\,\|\beta_{\cdot j}\|_{2}^2 \Big\} \; .
\end{equation}
Setting $\delta=0$ recovers the measurement-error-naive group lasso from the simplified case.

\begin{assumption}[Bounded measurement error]
\label{assumption6}
    Assume that the estimation error from the joint estimation stage satisfies $\|U\|_{\infty} \leq \delta_\infty$.
\end{assumption}

Note that \citet{Zhang2022JointVariables} states that $\frac{1}{n}\|\widehat Z - Z\|_F = O_p(n^{1/2})$, which allows us to assume $\delta_{\infty} = O(n^{-1/2})$. Thus, the assumption that the measurement error is bounded is not unreasonable. No other assumptions are made about the distribution of $U$.

\begin{theorem}[Prediction Error Rate, Estimated $\hat{Z}$]

Under Assumptions \ref{assumption1}\textendash\ref{assumption6}:

(a) For any fixed $j \in \{1,\dots,q\}$, with tuning parameters $\lambda_j = C_1\sqrt{k/n}$ and $\delta = C_2 \delta_{\infty}$ where $C_1, C_2$ are positive constants:
\[\frac{1}{n}\|\hat{Z}(\hat{\beta}_{\cdot j} - \beta_{\cdot j})\|_2^2 = O_p\Big(\frac{1}{n} + \delta^2_\infty\Big).\]

(b) For uniform control over all $j \in \{1,\dots,q\}$, with tuning parameters $\lambda = C_1\sqrt{\log q/n}$ and $\delta = C_2 \delta_{\infty}$:
\[\max_{j=1,\dots,q} \frac{1}{n}\|\hat{Z}(\hat{\beta}_{\cdot j} - \beta_{\cdot j})\|_2^2 = O_p\Big(\frac{\log q}{n} + \delta^2_\infty\Big).\]

When $\delta_{\infty} = O(n^{-1/2})$ from the first stage estimation, both rates reduce to their respective statistical error rates.

\end{theorem}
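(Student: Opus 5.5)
We follow the basic-inequality argument behind Theorem \ref{theorem1}, now with the loss evaluated at $\hat Z = Z + U$ and the extra ridge term $\frac{\delta}{2}\|\beta_{\cdot j}\|_2^2$. Fix $j$ and write $\Delta_j = \hat\beta_{\cdot j} - \beta_{\cdot j}$. Let $\tilde L_{Y_j}(\cdot) = L_{Y_j}(\cdot;\hat Z)$ denote the empirical loss built from the estimated design.

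\textbf{Basic inequality.} Optimality of $\hat\beta_{\cdot j}$ against the feasible point $\beta_{\cdot j}$ gives
\[
\tfrac1n\{\tilde L_{Y_j}(\beta_{\cdot j}+\Delta_j)-\tilde L_{Y_j}(\beta_{\cdot j})\} \le \lambda(\|\beta_{\cdot j}\|_2-\|\hat\beta_{\cdot j}\|_2)+\tfrac{\delta}{2}(\|\beta_{\cdot j}\|_2^2-\|\hat\beta_{\cdot j}\|_2^2).
\]
We subtract the linear term $\langle\nabla\tilde L_{Y_j}(\beta_{\cdot j}),\Delta_j\rangle$ from both sides. On the left we apply restricted strong convexity (Assumption \ref{assumption4}), which we assume holds also for the design $\hat Z$ with a constant $\kappa' \ge \kappa/2$. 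To justify this, note that the Hessian at $\hat Z$ is $\frac1n\hat Z^\top W\hat Z$. Here $W$ has entries bounded below by Assumption \ref{assumption1}, and it remains bounded because $\|U\|_\infty\le\delta_\infty\to0$. Moreover, $\frac1n\hat Z^\top\hat Z$ differs from $D$ by $O(\delta_\infty)$ in operator norm, using Assumption \ref{assumption3} and the bound $\frac1n\|Z^\top U\|\le \sqrt{k}\,\nu_{\max}^{1/2}\delta_\infty$. The left side is then at least $\frac{\kappa'}{2}\|\Delta_j\|_2^2$. On the right, the triangle inequality bounds the penalty differences by $\lambda\|\Delta_j\|_2$. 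The ridge difference is at most $\delta\|\beta_{\cdot j}\|_2\|\Delta_j\|_2$, which is $O(\delta)\|\Delta_j\|_2$ since $\|\beta_{\cdot j}\|_2$ is bounded by Assumptions \ref{assumption1} and \ref{assumption3}.

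\textbf{Score decomposition.} This is the main obstacle. The gradient at the truth with the noisy design is $-\frac1n\hat Z^\top\{Y_{\cdot j}-\mu_j(\hat Z)\}$, and we split it into three pieces:
\[
\tfrac1n\hat Z^\top\{Y_{\cdot j}-\mu_j(\hat Z)\}=\tfrac1n Z^\top\{Y_{\cdot j}-\mu_j(Z)\}+\tfrac1n U^\top\{Y_{\cdot j}-\mu_j(Z)\}+\tfrac1n\hat Z^\top\{\mu_j(Z)-\mu_j(\hat Z)\}.
\]
\begin{itemize}
\item The first piece is $O(\sqrt{\log q/n})$ by Assumption \ref{assumption5}; for fixed $j$ it is $O(\sqrt{k/n})$.
\item The second piece is bounded deterministically by $\sqrt{k}\,\delta_\infty$, since $|Y_{ij}-\mu_{ij}|\le1$.
\item The third piece uses that the logistic link is $1/4$-Lipschitz, so $|\mu_{ij}(Z)-\mu_{ij}(\hat Z)|\le \frac14|U_i\beta_{\cdot j}|\le\frac14\sqrt{k}\,\delta_\infty\|\beta_{\cdot j}\|_2$. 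Combined with $\frac1n\|\hat Z\|$ being bounded in the appropriate norm, this gives $O(\delta_\infty)$.
\end{itemize}
No distributional assumption on $U$ is used; only the sup-norm bound of Assumption \ref{assumption6}. Hence $\|\nabla\|_2\le C(\sqrt{\log q/n}+\delta_\infty)$, where the constant $C$ depends on $k$, $M_2$ and $\nu_{\max}$.

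\textbf{Conclusion.} Combining the two steps yields
\[
\tfrac{\kappa'}{2}\|\Delta_j\|_2^2\le\big(C\sqrt{\log q/n}+C\delta_\infty+\lambda+C'\delta\big)\|\Delta_j\|_2 .
\]
With $\lambda=C_1\sqrt{\log q/n}$ and $\delta=C_2\delta_\infty$, this gives $\|\Delta_j\|_2=O_p(\sqrt{\log q/n}+\delta_\infty)$. For the prediction error we use
\[
\tfrac1n\|\hat Z\Delta_j\|_2^2\le\nu_{\max}\big(\tfrac1n\hat Z^\top\hat Z\big)\|\Delta_j\|_2^2\le(\nu_{\max}(D)+o(1))\|\Delta_j\|_2^2,
\]
and $(a+b)^2\le2a^2+2b^2$ then gives part (b).

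For part (b) the constants must not depend on $j$. This holds because the bounds on the second and third pieces are deterministic and uniform in $j$, and Assumption \ref{assumption5} is assumed to give the $\sqrt{\log q/n}$ bound simultaneously over all $j$ via a union bound. For part (a) we work with a single $j$, so no union bound is needed. The score is then $O_p(n^{-1/2})$ with $k$ fixed, and choosing $\lambda_j=C_1\sqrt{k/n}$ gives the rate $O_p(1/n+\delta_\infty^2)$. Finally, if $\delta_\infty=O(n^{-1/2})$, the $\delta_\infty^2$ term is $O(1/n)$ and is absorbed, which gives the closing remark of the theorem.
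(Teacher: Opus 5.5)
Your argument works at the paper's level of rigor, but it places the measurement error in a different spot. The paper keeps curvature and score on the true design. It splits $L_{Y_j}(\hat\beta_{\cdot j};\hat Z)-L_{Y_j}(\beta_{\cdot j};\hat Z)$ into three parts: the curvature term at $Z$ (Assumption 4 used verbatim), the oracle score at $Z$ (Assumption 5), and a bias term (III) $=g(\hat\beta_{\cdot j})-g(\beta_{\cdot j})$ with $g(b)=L_{Y_j}(b;\hat Z)-L_{Y_j}(b;Z)$. It bounds (III) by a mean-value theorem in $\beta$, using the same two estimates you use: the $\|U\|_\infty$ bound against residuals in $[-1,1]$, and the $1/4$-Lipschitz link with $\|Z\|_{op}\le\sqrt{n\nu_{\max}(D)}$.

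You instead work entirely at $\hat Z$ and push the error into the score. There both perturbation pieces are evaluated at the true $\beta_{\cdot j}$, so you only need $\|\beta_{\cdot j}\|_2$ bounded. That does follow from Assumptions 1 and 3, using that $Z$ is centered so $|\gamma_j|<M_2$. The paper's term (B), by contrast, needs $\|\tilde\beta\|_2$ bounded along the segment to $\hat\beta_{\cdot j}$ and simply asserts it.

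The price is that restricted strong convexity at $\hat Z$ is not among Assumptions 1--6. Your Hessian argument controls $\frac1n\hat Z^\top W\hat Z$ only at $\beta_{\cdot j}$, not along the segment, where Assumption 1 says nothing about the weights. The clean fix is to derive it from Assumption 4. Write $\mathcal{E}_X(\Delta)$ for the first-order remainder of $\frac1n L_{Y_j}(\cdot\,;X)$ at $\beta_{\cdot j}$. Then
\[
\mathcal{E}_{\hat Z}(\Delta)-\mathcal{E}_{Z}(\Delta)=\tfrac1n\langle\nabla g(\tilde b)-\nabla g(\beta_{\cdot j}),\Delta\rangle\ \ge\ -C\delta_\infty\|\Delta\|_2-C'\delta_\infty\|\Delta\|_2^2 ,
\]
using $\|\tilde b\|_2\le\|\beta_{\cdot j}\|_2+\|\Delta\|_2$. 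So for small $\delta_\infty$ you get curvature $\frac{\kappa}{4}\|\Delta\|_2^2$ up to a tolerance $C\delta_\infty\|\Delta\|_2$, which leaves your rate unchanged.

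This is exactly the paper's term (III) reappearing, so the two proofs use the same ingredients arranged differently. The same computation also repairs the paper's (B), since the extra $\delta_\infty\|\Delta\|_2^2$ is absorbed by the curvature. In either version, the logistic remainder grows at most linearly in $\|\Delta\|_2$, so Assumption 4 can only hold on a ball. A standard convexity-based localization step is therefore tacitly needed.
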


\textbf{Corollary 1:} When $\delta = O(n^{-1/2})$, the measurement error term is dominated by the statistical error, and we achieve the same rate as when $Z$ is perfectly observed. 

\section{Simulations}
\label{s:inf}
We evaluate our method through extensive simulations comparing three approaches: (1) network-only estimation ignoring covariates, (2) joint estimation without selection \citep{Zhang2022JointVariables}, and (3) our proposed method with covariate selection.

\subsection{Simulation Design}
We generated simulated data for $n = 200$ nodes, $q = 25$ node covariates, and $k = 2$ latent dimensions. To generate the networks and node covariate values, nodes were first divided into $k$ subsets with centers $\mu_k \sim \text{Uniform}(-1, 1)$. Latent positions $Z \in \mathbb{R}^{n \times k}$ were then generated such that, for node $i$ in subset $j$, $Z_i \sim N(\mu_j, 1)$. This simulates some amount of homophily in the network. Z was transformed to be centered and standardized with $Z = JZ$, $\| ZZ^\top\|_F = n$, with diagonal, non-identity empirical covariance. $\beta$ values were generated as $\beta \sim N(1, 0.1)$ for active components and set to $0$ otherwise. Finally, $A$ and $Y$ were generated according to the equations above. $A$ was forced to be symmetric by setting the lower triangle of the matrix equal to the upper triangle. Networks were generated with two levels of density by manipulating the node sociality parameters $\alpha = \big(\alpha_1, \dots, \alpha_n \big)$. A distribution of $\alpha_i \sim \text{Uniform}(-1, -0.5)$ generated networks with average density 0.37 (referred to as the less sparse network scenario below) and $\alpha_i \sim \text{Uniform}(-2, -1)$ generated networks with average density 0.013 (referred to as the sparse network scenario below). We vary the covariate sparsity level across simulations, with each configuration repeated 30 times. 

\subsection{Simulation Results}
Results in Figure \ref{boxplot_sbs} demonstrate the result of modeling 30 independent simulated networks at different levels of covariate sparsity with four different methods: using network data alone, incorporating network data without adjusting for sparsity \citep{Zhang2022JointVariables}, adjusting for sparsity (lasso), and adjusting for sparsity as well as measurement error (meLasso). We present results using the area under the curve (AUC) metric for network information $A$ and $Y$, averaged over all nodes in the network, to determine model performance. Importantly, results are presented as the average AUC for all covariates included in the model; if the sparsity-aware methods dropped a covariate, it was not included in the final average AUC. Results are presented in Table \ref{tab_aucY}.

Simulation results are presented in Figure \ref{boxplot_sbs} and Table \ref{tab_aucY}. For both less-sparse networks and sparse networks, model performance in predicting the network is robust to sparsity in the node covariates for all methods, as can be seen in panels A and B in Figure \ref{boxplot_sbs}. Given that the network and covariate information balancing parameter we used was small, $\Lambda = 0.1$, this was to be expected; the estimation procedure was weighted much more strongly towards the network information, and therefore information from the covariates had a limited impact on the estimation of the latent space positions informing the network edges. Additionally, when comparing the results of the sparse and less sparse networks, the model was better able to capture network information when the underlying network was denser, resulting in AUC for the network information being closer to 0.65 regardless of the modeling method, while it was closer to 0.57 for the sparse networks.

When modeling node covariates, model performance is sensitive to both the sparsity of the network and the amount of noise in the node covariates. Panel D in Figure \ref{boxplot_sbs} demonstrates that, for the less sparse networks, the latent positions are modeled sufficiently well so that the primary determinant of the average AUC is the number of covariates included in the modeling approach that are actually related to the latent space. When there are 0 noise covariates, all of the models can predict the node covariates well on average. However, as the number of noise covariates increases, methods that do not consider the sparsity in the node covariates see degraded performance because the AUC is still averaging over covariates that are not related to the latent space. However, the lasso and meLasso approaches were able to remove approximately 86\% of the noise covariates (Table \ref{tab_varselect}), resulting in a much higher average AUC. 

If the network is sparse, the benefit of including node covariate information in the joint estimation procedure becomes more apparent (Panel C in Figure \ref{boxplot_sbs}). Even when all covariates are associated with the latent space, the modeling approach that uses the network data alone performs substantially worse than the other methods that include covariate information. As the number of noise covariates increases, the benefit of including covariate information during estimation decreases unless we account for potential sparsity. Once the majority of the node covariates are noise, we can see that the lasso and meLasso approaches are more robust than the method that does not adjust for sparsity, though the results are highly variable. This is due in large part because the lasso procedure struggled to identify noise covariates when the network information was sparse; when 20 of the 25 node covariates are noise, lasso was only able to drop 12\% of the noise covariates on average and meLasso was only able to drop 56\% on average (Table \ref{tab_varselect}).

\begin{figure}[htbp]
    \centerline{%
    \includegraphics[width = 1.05\textwidth]{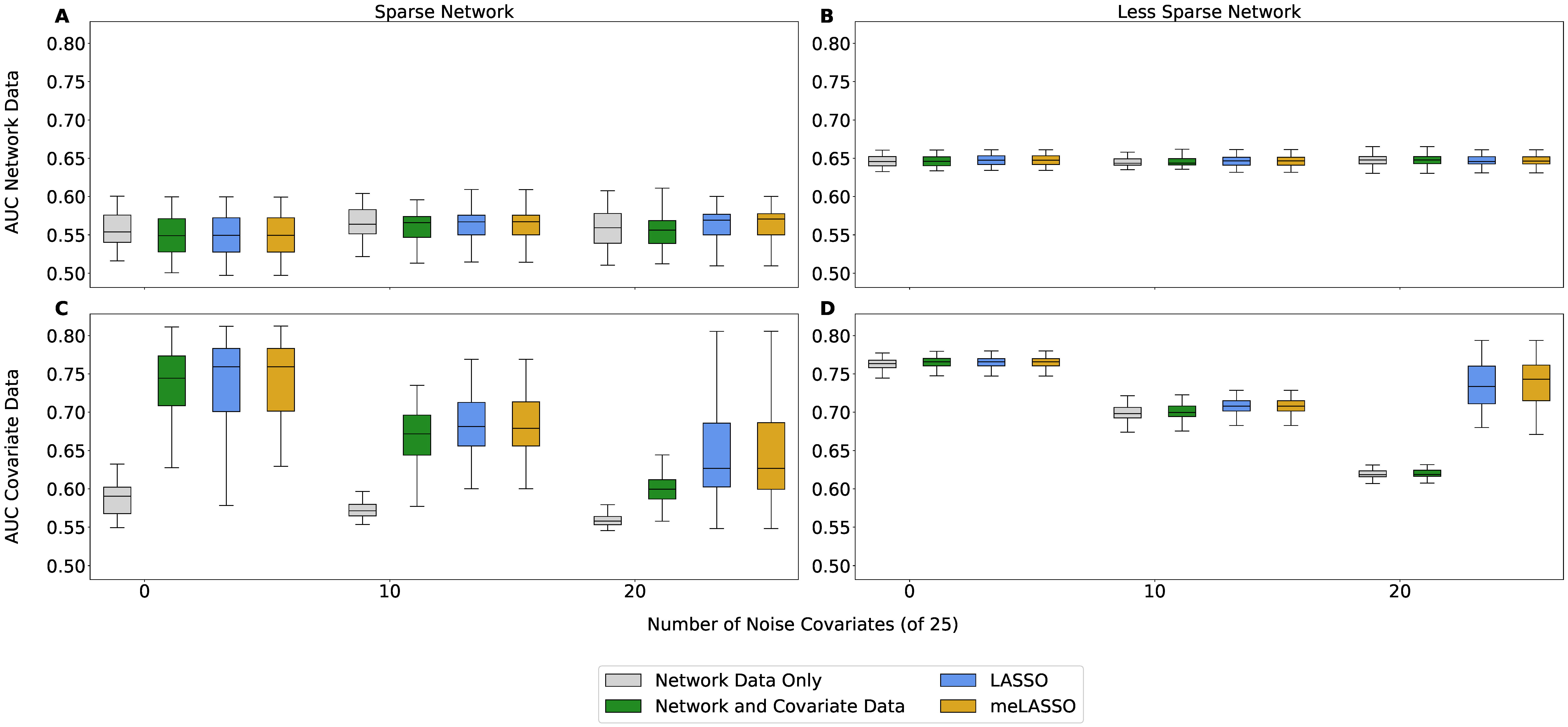}}
    \caption{\textbf{Model Performance in Simulations.} Boxplots represent the results from 30 independently simulated networks of $n = 200$ nodes. Sparse networks have an average density of 0.013 (average degree = 2.6); less sparse networks have an average density of 0.37 (average degree = 74). Panels A and B show the AUC for network data in the sparse and less sparse networks, respectively. Panels C and D show the AUC for node covariate data for the sparse and less sparse networks, respectively.}
    \label{boxplot_sbs}
\end{figure}

\begin{table}[htbp]
\caption{Performance in predicting $q = 25$ node covariates for each method under varying numbers of noise covariates for $n = 200$ nodes.}
\centering
\begin{tabular}{lccc}
\hline
\textbf{Method} & \textbf{0 noise covariates} & \textbf{10 noise covariates} & \textbf{20 noise covariates} \\
& Mean (SD) & Mean (SD) & Mean (SD) \\
\hline
& \multicolumn{3}{c}{\textbf{Less sparse network (density = 0.37)}}\\
\hline
Network data only     & 0.76 (0.01) & 0.70 (0.01) & 0.62 (0.01) \\
Network and covariate data  & 0.77 (0.01) & 0.70 (0.01) & 0.62 (0.01) \\
lasso     & 0.77 (0.01) & 0.71 (0.02) & 0.73 (0.04) \\
meLasso  & 0.77 (0.01) & 0.71 (0.01) & 0.74 (0.04) \\
& \multicolumn{3}{c}{\textbf{Sparse network (density = 0.013)}}\\
\hline
Network data only    & 0.59 (0.02) & 0.57 (0.01) & 0.56 (0.01) \\
Network and covariate data    & 0.73 (0.06) & 0.67 (0.04) & 0.60 (0.02) \\
lasso     & 0.74 (0.06) & 0.68 (0.05) & 0.69 (0.14) \\
meLasso  & 0.74 (0.06) & 0.68 (0.05) & 0.69 (0.14) \\
\hline
\bottomrule
\end{tabular}
\label{tab_aucY}
\end{table}

\begin{table}[htbp]
\caption{Variable selection with lasso and measurement-aware lasso.
Each cell reports the average proportion of noise covariates removed (true negatives, TN) and the proportion of true covariates remaining (true positives, TP) across 30 independent realizations.}
\centering
\begin{tabular}{lccc}
\hline
\textbf{Method} & \textbf{0 noise covariates} & \textbf{10 noise covariates} & \textbf{20 noise covariates} \\
& TN, TP & TN, TP &  TN, TP\\
\hline
& \multicolumn{3}{c}{\textbf{Less sparse network (density = 0.37)}}\\
\hline
lasso     & NA, 1 & 0.12, 1 & 0.86, 1 \\
meLasso  & NA, 1 & 0.12, 1 & 0.87, 1\\
& \multicolumn{3}{c}{\textbf{Sparse network (density = 0.013)}}\\
\hline
lasso     & NA, 1 & 0.03, 1 & 0.12, 0.97 \\
meLasso  & NA, 1 & 0.12, 0.97 & 0.56, 0.64\\
\hline
\bottomrule
\end{tabular}
\label{tab_varselect}
\end{table}

\section{Application to Real Data}
We demonstrate how this method can guide investigators planning large network studies. Given that ascertaining networks is time and labor-intensive, reducing data collection lessens participant burden and allows investigators to focus on better capture of more meaningful information.

We apply our model to data from a 2006 baseline survey of social networks in 75 villages in rural southern India conducted by \citet{Banerjee2013TheMicrofinance, dataset2013TheData}. Each household was surveyed on amenities, such as access to electricity, A more detailed survey was given to approximately half of households, in which individuals were asked information about their social network, defined as people who visit their home or whose home they visit, family ties, people they give or receive advice from, give or receive material support from (money, goods), and with whom they pray. This information was used to generate undirected, household-level social networks for each village. The village networks are effectively independent due to the distance between each village. The average network density was 0.05.

The dataset contains numeric and categorical data. We binarized the node (household) covariates such that the binary variable is relatively balanced, if possible, and variables associated with economic status, such as access type of roof material, generally represent higher or lower socioeconomic status. This is described in Table 2. We augmented the 8 real variables from the study by creating 24 random variables, which were Bernoulli(0.5). We then emulated a pilot study by randomly selecting 10 villages. Measurement error-aware lasso identified covariates associated with the network latent space. Covariates dropped by at least $70\%$ of pilot villages were excluded from the full analysis unless they were rare (prevalence $< 10\%$ in any pilot village). This dropped one of the original covariates and 21 of the 24 of the noise covariates, reducing the final data collection required by 69\%. We then analyzed the data from the remaining 65 villages using only the covariates selected in the pilot study and compared the results to using all 32 covariates. The variable original to the study that was dropped in this process was the household religion, likely because more than half of the villages were entirely Hindu. Thus, this variable was unlikely to be informative given the lack of variability.

\begin{table}[ht]
\centering
\caption{Binarization of household covariates.}
\label{tab:binarization}
\begin{tabular}{lcp{5cm}l}
\hline
\toprule
\textbf{Variable} & \textbf{Value} & \textbf{Definition} & \textbf{Average Prevalence}\\
\midrule
\multirow{2}{6em}{Religion}
  & 1 & Household religion is Hinduism & 96.0\%\\
  & 0 & Any other religion & 4\%\\
\hline
\multirow{2}{6em}{Caste}
  & 1 & Caste or subcaste is 'Schedule Caste' or 'Schedule Tribe' & 20.5\%\\
  & 0 & All other castes & 79.5\%\\
  \hline
\multirow{2}{6em}{Number of rooms}
  & 1 & Number of rooms $>2$ & 34.8\%\\
  & 0 & Two rooms or fewer & 65.2\%\\
  \hline
\multirow{2}{6em}{Number of beds}
  & 1 & Number of beds $>0$ & 48.8\%\\
  & 0 & No beds & 51.2\%\\
  \hline
\multirow{2}{6em}{House has electricity}
  & 1 & Yes, private & 61.3\%\\
  & 0 & No, or government-provided & 38.7\%\\
  \hline
\multirow{2}{6em}{Latrine}
  & 1 & Latrine present & 26.0\%\\
  & 0 & None & 74.0\%\\
  \hline
\multirow{2}{6em}{Roof type}
  & 1 & Either thatch or tile & 33.9\%\\
  & 0 & Stone, sheet metal, or reinforced cement concrete & 66.1\%\\
  \hline
\multirow{2}{6em}{Community leader}
  & 1 & Yes & 12.6\%\\
  & 0 & No & 87.4\%\\
\hline
\bottomrule
\end{tabular}
\label{tab_aucY}
\end{table}

Network prediction AUC was unchanged (mean difference = 0.00, standard deviation (SD) = 0.002). Comparing the mean AUC for predicting all 32 covariates to the mean AUC predicting the selected 6 covariates, the AUC for the restricted variable set was 0.09 higher on average (SD: 0.02). When AUC is calculated only among the restricted variable sets, the difference disappears, which can be seen in Figure \ref{real_dat_fig}. For the majority of villages, the estimates obtained with meLasso and without on the restricted variable set are equivalent. Of the three cases where the estimates differed by more than 0.01, it was in favor of meLasso. This indicates that utilizing measurement error-aware lasso to select a subset of variables is not detrimental to estimation and does screen out covariates that are not associated with the latent space, and therefore cannot be predicted with the available information.

\begin{figure}
    \centerline{%
    \includegraphics[width = \textwidth]{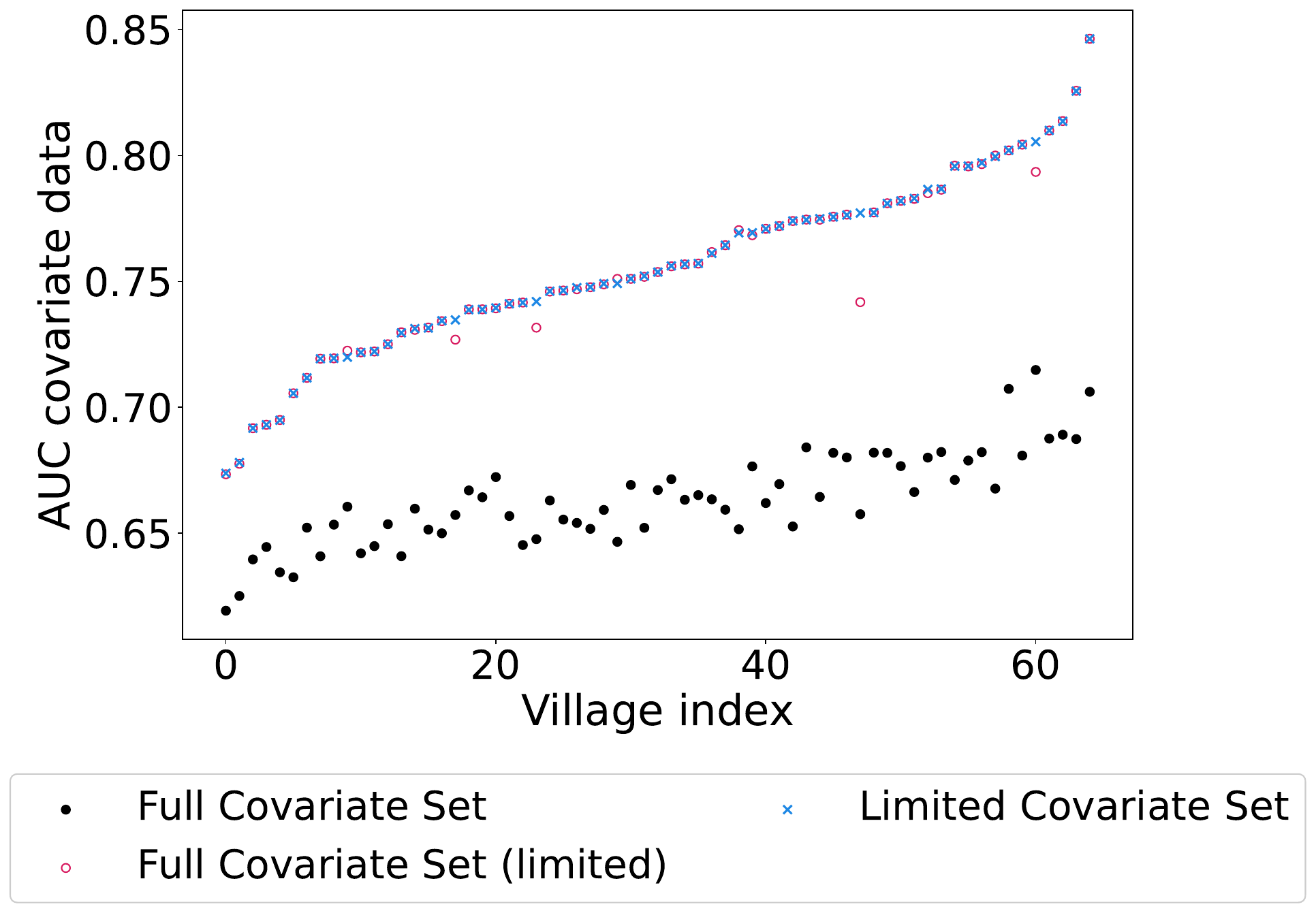}}
    \caption{\textbf{Model performance predicting node covariates with and without an emulated pilot study.} The figure shows the average AUC for node covariates on 65 villages not selected to be in a pilot study. Markers in black show the average AUC over all 32 variables when estimated on the full covariate set. 'X' Markers in blue show the average AUC when modeling only the covariates identified as of interest in the emulated pilot study. Markers in red, described as "full covariate set (limited)", show the average AUC on variables retained by the pilot study but estimated using all covariates. Villages are ordered by increasing AUC for the limited covariate set.}
    \label{real_dat_fig}
\end{figure}

This data is publicly available via the Harvard Dataverse at \url{https://doi.org/10.7910/DVN/U3BIHX}. All the code used in this project is available at \url{https://github.com/onnela-lab/lsm-lasso}.

\section{Discussion}
\label{s:discuss}

We proposed a covariate selection framework for joint latent space modeling of network data with high-dimensional binary node attributes. We exploit the joint model structure of \citet{Zhang2022JointVariables} which allows node attributes to inform latent position estimation even when networks are sparse and many nodes are isolated while introducing a group-sparse selection step that identifies which node covariates are most closely associated with the network-generating latent space. Our approach uses group lasso for logistic regression \citep{Meier2008TheRegression} and incorporates a ridge-type stabilization term motivated by measurement-error-aware regularization \citep{Srensen2018CovariateError,Srensen2015MeasurementCorrection} to account for uncertainty in estimated latent positions.

Our theoretical results establish prediction error bounds for the covariate component, both when latent positions are treated as fixed or observed and when they are estimated with bounded error. Under uniform control across $q$ covariates, the resulting rate is of order $O(\log q/n)$ up to an additional term reflecting latent position estimation error, consistent with high-dimensional M-estimation theory for generalized linear models with decomposable regularizers \citep{Negahban2012ARegularizers,Meier2008TheRegression}. These results provide formal support for using the proposed screening-and-refit strategy to maintain predictive accuracy in high-dimensional settings. 

Across simulations, selection procedures that account for covariate sparsity were substantially more robust than approaches that included all node covariates, particularly as the proportion of noise covariates increased. The real-data analysis illustrates a complementary and practically motivated use case: an emulated pilot study that screens a large covariate battery before full network ascertainment. In settings where surveys are burdensome and network measurement is expensive, this can support efficient data collection by focusing measurement on variables most strongly linked to the network-generating latent structure, while preserving predictive performance for the network component \citep{Banerjee2013TheMicrofinance}. These results were strong despite selecting the pilot villages entirely at random, while a real pilot study would likely take much greater care to ensure that the villages selected were a representative sample of the population of interest. Perhaps most importantly, the simulation and real data examples show that there can be substantial benefit to using sparsity-aware estimation techniques when including node covariate information in latent space modeling, but also that there are very few consequences to doing so. In our simulations, true covariates were rarely dropped unless both the network and the node covariate information were very sparse, a situation in which there was the least amount of information with which to estimate the modeling parameters. Additionally, the real data results show both that the only true variable dropped likely contained little information about the network due to lack of variability, and that collecting the full data did not result in better estimation or prediction of the covariates selected by measurement-aware lasso.

Model performance depends on a small number of tuning and modeling choices. First, the latent dimension $k$ can be selected using a modest candidate set and choosing the value that optimizes out-of-sample predictive performance for the network component, following common practice in latent position models \citep{Hoff2002LatentAnalysis,Hoff2005BilinearData}. Second, the balance parameter $\Lambda$ controls the relative influence of the node attributes versus the network in the joint objective \citep{Zhang2022JointVariables}. In sparse networks, moderate values of $\Lambda$ can help stabilize latent position estimation by leveraging node covariates for low-degree and isolated nodes, but overly large values can shift the latent space toward explaining the covariate likelihood at the expense of network fit. Third, the selection penalty $\lambda$ governs sparsity: larger $\lambda$ yields fewer selected covariates, and we found it useful to constrain the search to solutions that retain at least one covariate to avoid degenerate ``all-zero'' fits in pilot-screening contexts. Finally, the measurement-error stabilization parameter $\delta$ trades off bias and variance: larger $\delta$ shrinks coefficients more aggressively and can reduce instability induced by noisy $\widehat Z$, consistent with prior findings in measurement-error-aware penalization \citep{Srensen2018CovariateError,Srensen2015MeasurementCorrection}. 

Several limitations warrant attention. First, we focused on binary node covariates with logistic models. Extending to continuous or mixed-type covariates is conceptually straightforward within generalized linear modeling, but in practice requires careful scaling and regularization calibration to ensure comparability of penalties across covariate types \citep{Zhang2022JointVariables}. Second, we assumed an undirected network. Directed networks would require modeling asymmetric tie formation and potentially distinct latent representations for sending and receiving, as well as modified degree-heterogeneity terms \citep{Hoff2002LatentAnalysis,Krivitsky2009RepresentingModels}. Third, our estimation objective is nonconvex, and while first-order methods such as ADAM and Adagrad are effective in practice \citep{Kingma2015Adam:Optimization}, the presence of local optima implies that initialization and optimization diagnostics remain important. Reporting sensitivity to multiple random restarts and assessing stability of the selected covariate set are recommended in applied analyses.

There are several promising extensions. One is to strengthen the model selection theory by characterizing conditions under which exact support recovery is achievable in the latent-position-as-regressor setting. Another direction is computational scaling. For larger networks, developing approximations or exploiting sparsity-aware optimization may substantially reduce runtime. Finally, it would be valuable to study robustness to model misspecification, including deviations from the assumed latent space geometry and additional structural features such as clustering and degree heterogeneity beyond what is captured by the current specification \citep{Athreya2018StatisticalSurvey,Athreya2021OnGraphs}.

The increasing availability of network data paired with rich node-level measurements motivates methods that are both statistically principled and practically useful for sparse networks. By combining joint latent space modeling with group-sparse covariate screening and a measurement-error-aware stabilization, our framework provides a tractable approach to identify and leverage node attributes that align with the network-generating latent structure, supporting both interpretability and efficient study design in network-based social and epidemiological research.


\section*{Declarations}

\subsection*{Availability of data and materials}

This data is publicly available via the Harvard Dataverse at \texttt{https://doi.org/10.7910/DVN/U3BIHX}. All the code used in this project is available at \texttt{https://github.com/onnela-lab/lsm-lasso}.

\subsection*{Funding}

This project was supported by the NIH grants R01AI138901 and T32AI007358. The funder had no role in the study design, collection, analysis, or interpretation of the data, writing the manuscript, or the decision to submit the paper for publication.

\subsection*{Authors' contributions}

E.G.C. and J.P.O. designed the research; E.G.C. performed the research. E.G.C. and Y.Z. wrote and edited the theoretical proofs. E.G.C., Y.Z., and J.P.O. wrote and edited the paper. J.P.O. supervised the research.

\subsection*{Acknowledgments}

We thank the members of the Onnela lab for their insights and feedback on this project.

\subsection*{Declaration of the use of generative AI and AI-assisted technologies}
During the preparation of this work, the authors used ChatGPT 5.0 to assist in LaTeX typesetting and fine-tuning of figure appearance. After using this tool, the authors reviewed and edited the content as necessary and take full responsibility for the content of the publication.

\section*{Supplementary material}
The supplementary information contains proofs of the theoretical results (Theoreoms 1 and 2).

\section*{References}

\newpage
\bibliographystyle{apalike} %
\bibliography{paper2_ref}%


\newpage

\appendix

\appendixone
\section*{Appendix 1}
\setcounter{theorem}{0}
\setcounter{assumption}{0}

\subsection{Reminder of Model Notation and Assumptions}
Let $\mathcal{G} = (\mathcal{V},\mathcal{E})$ be an undirected network of $n$ nodes, with a symmetric adjacency matrix $A \in \mathbb{R}^{n \times n}$ such that $A_{ii', i \neq i'} = 1$ if the edge exists and 0 otherwise. For each node $i$, we observe $q$ binary covariates $Y \in \{0,1\}^{n \times q}$ and each node $i$ is associated with an unobserved, low-dimensional latent position $Z_i \in \mathbb{R}^k$ for some fixed $k$. We center the latent variables by subtracting the row means and require that the empirical covariance of $Z$, $\frac{1}{n} Z^\top Z$ is diagonal but not identity.

We assume that edges between nodes are independent conditional on their latent positions such that
\[
    A_{ii'} = A_{i'i} \sim \text{Bernoulli}(P^A_{ii'})\:, \quad \text{logit} P^A_{ii'} = \theta^A_{ii'} = \alpha_i + \alpha_{i'}+ Z_i^\top Z_{i'}\;,
\]
where $P^A_{ii'}$ is the probability of an edge between nodes $i$ and $i'$ and $\alpha = (\alpha_1, \dots, \alpha_n)$ is a node-specific `sociability' factor that captures the propensity of nodes to form edges independent of their latent position.

For binary covariates, we assume that, conditional on $Z$, each covariate $Y_j$ for $j \in \{1, \dots, q\}$ follows an independent logistic regression
\[Y_{ij} \sim \text{Bernoulli}(P^Y_{ij})\;, \quad \text{logit} P^Y_{ij} = \theta^Y_{ij} = \gamma_j + Z_i\beta_{\cdot j}\;,\]
where $P^Y_{ij}$ is the probability that covariate $j$ for node $i$ is equal to 1, $\gamma \in \mathbb{R}^q$ is the vector of intercept terms and $\beta \in \mathbb{R}^{k \times q}$ are the regression coefficients linking latent positions to node covariates.

Crucially, we assume that $\beta$ exhibits group sparsity: many columns of $\beta$ are zero, indicating that the corresponding covariate is not associated with the network-generating latent space. We define the active set  of $\beta$ as $S = \{j \in \{1, \dots, q\}: \|\beta_{\cdot j}\|_2 > 0\}$ with cardinality $s = |S|$.

Our assumptions are as follows:

\begin{assumption}[Bounded linear predictors]
\label{assumption1}
There exists $M_1 > 0$ such that $-M_1 < \Theta_{ii'}^A < M_1$ for $1 \leq i, i' \leq n$  and $M_2 > 0$ such that $-M_2 < \Theta_{j}^Y < M_2$ for $1 \leq j \leq q$. 
\end{assumption}

\begin{assumption}[Sparse Group Structure]
\label{assumption2}
Each column $\beta_{\cdot j}$ is treated as one group of size $k$ and the number of truly active covariates $s = |\{j : \|\beta_{\cdot j}\|_2 > 0\}|$ satisfies $s = o(n / \log q)$.
\end{assumption}

\begin{assumption}[Design Normalization]
\label{assumption3}
The empirical covariance $D = \tfrac{1}{n}Z^\top Z$ is diagonal (but not identity) with eigenvalues $0 < \nu_{\min}(D) \leq \nu_{\max}(D) < \infty$.
\end{assumption}

\begin{assumption}[Restricted Strong Convexity]
\label{assumption4}
Let the error between the estimated regression coefficients and the truth be $\widehat{\beta}_{\cdot j} - \beta_{\cdot j} = \Delta_j$. The notation $\langle \cdot,\cdot \rangle$ denotes an inner product. For each $j$, there exists $\kappa > 0$ such that for all $\Delta_j \in \mathbb{R}^k$,
\[
\frac{1}{n}\left(L_{Y_j}(\beta_{\cdot j} + \Delta_j; Z) - L_{Y_j}(\beta_{\cdot j}; Z) - \langle \nabla L_{Y_j}(\beta_{\cdot j}; Z), \Delta_j \rangle\right) \geq \frac{\kappa}{2}\|\Delta_j\|_2^2.
\]

\end{assumption}

\begin{assumption}[Score Concentration]
\label{assumption5}
For each $j$, with $\mu_{ij}(Z) = \text{logit}^{-1}(\gamma_j + Z_i^\top \beta_{\cdot j})$, we have
\[
\left\|\frac{1}{n}Z^\top\{Y_{\cdot j} - \mu_j(Z)\}\right\|_2 \leq C_0\sqrt{\frac{\log q}{n}}
\]
with high probability.
\end{assumption}

\begin{assumption}[Bounded measurement error]
\label{assumption6}
    Assume that the estimation error from the joint estimation stage satisfies $\|U\|_{\infty} \leq \delta_\infty$.
\end{assumption}

Assumption \ref{assumption1} is the same assumption used in \citet{Zhang2022JointVariables}; assumption \ref{assumption3} allows for greater identifiability of $Z$. Assumption \ref{assumption4} is a standard condition for high-dimensional M-estimators, ensuring that the loss function is strongly convex. The logistic loss is twice differentiable; thus, it is equivalent to a lower bound on the eigenvalues of the Hessian $\nabla^2L_{Y_j}(\beta_{\cdot j}, Z)$. It is well established by \citet{Negahban2012ARegularizers} and \citet{Meier2008TheRegression}. Assumption \ref{assumption5} follows from standard sub-exponential concentration \citep{Meier2008TheRegression}. Finally, \citet{Zhang2022JointVariables} states that $\frac{1}{n}\|\widehat Z - Z\|_F = O_p(n^{1/2})$, which allows us to assume $\delta_{\infty} = O(n^{-1/2})$. Thus, assumption \ref{assumption6}, that the measurement error is bounded, is not unreasonable. No other assumptions are made about the distribution of $U$.

The joint estimation objective combines the negative log-likelihoods for both network and covariate data such that $L(Z, \alpha, \gamma, \beta) = L_A + \Lambda L_Y\;,$ where: 
\[L_A = -\log P(A|Z, \alpha) = \sum_{1\leq i < i' \leq n} \{A_{ii'} \Theta^A_{ii'} - \log(1 + \exp(\Theta^A_{ii'}))\}\;,\]
\[L_Y= -\log P(Y|\gamma, \beta) = \sum_{1\leq i \leq n, 1 \leq j \leq q} \{Y_{ij} \Theta^Y_{ij} - \log(1 + \exp(\Theta^Y_{ij}))\}\;.\]
The weight parameter $\Lambda \geq 0$ balances the information from network structure and covariates.

\subsection{Proof of Theorem \ref{theorem1}}
Define the prediction error $d^2(\widehat \eta_j, \eta_j) = \frac{1}{n}\|Z^\top(\widehat \beta_{\cdot j} - \beta_{\cdot j})\|_F^2$, $\eta_j = Z\beta_{\cdot j} $.

\begin{theorem}[Prediction Error Rate, Perfectly Observed $Z$]
\label{theorem1}
Under Assumptions \ref{assumption1}\textendash\ref{assumption5}: 

(a) For any fixed $j \in \{1,\dots,q\}$, with $\lambda_j = C\sqrt{k/n}$ for some constant $C$:
\[d^2(\widehat \eta_{\cdot j}, \eta_{\cdot j}) = O_p\Big(\frac{k}{n}\Big).\]

(b) For uniform control over all $j \in \{1,\dots,q\}$, with $\lambda = C\sqrt{\log q/n}$ for some constant $C$:
\[\max_{j=1,\dots,q} d^2(\widehat \eta_{\cdot j}, \eta_{\cdot j}) = O_p\Big(\frac{\log q}{n}\Big).\]
\end{theorem}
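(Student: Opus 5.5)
The argument is the standard deterministic basic-inequality one for a single group (Negahban et al., 2012), applied column by column. For part (b) we then take a union bound over the $q$ columns.

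Fix $j$ and write $\Delta_j=\widehat\beta_{\cdot j}-\beta_{\cdot j}$. We work with the loss normalized by $1/n$ throughout, absorbing the factor into $\lambda$. Optimality of $\widehat\beta_{\cdot j}$ gives
\[
\tfrac1n L_{Y_j}(\beta_{\cdot j}+\Delta_j)+\lambda\|\beta_{\cdot j}+\Delta_j\|_2 \le \tfrac1n L_{Y_j}(\beta_{\cdot j})+\lambda\|\beta_{\cdot j}\|_2 .
\]
Subtracting the linear term and invoking Assumption \ref{assumption4} yields
\[
\tfrac{\kappa}{2}\|\Delta_j\|_2^2 \le \bigl|\langle \tfrac1n\nabla L_{Y_j}(\beta_{\cdot j}),\Delta_j\rangle\bigr| + \lambda\bigl(\|\beta_{\cdot j}\|_2-\|\beta_{\cdot j}+\Delta_j\|_2\bigr).
\]
The gradient of the logistic loss at the truth is $-\tfrac1n Z^\top\{Y_{\cdot j}-\mu_j(Z)\}$. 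By Cauchy--Schwarz and the triangle inequality,
\[
\tfrac{\kappa}{2}\|\Delta_j\|_2^2\le \bigl(\|\tfrac1n Z^\top(Y_{\cdot j}-\mu_j)\|_2+\lambda\bigr)\|\Delta_j\|_2 .
\]
With only one group there is no cone condition to verify, since the decomposability step is trivial. If we choose $\lambda\ge 2\|\tfrac1n Z^\top(Y_{\cdot j}-\mu_j)\|_2$, we get $\|\Delta_j\|_2\le 3\lambda/\kappa$.

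Next we convert this to prediction error. By Assumption \ref{assumption3},
\[
d^2(\widehat\eta_j,\eta_j)=\tfrac1n\|Z\Delta_j\|_2^2=\Delta_j^\top D\Delta_j\le \nu_{\max}(D)\|\Delta_j\|_2^2\le 9\nu_{\max}(D)\lambda^2/\kappa^2 .
\]
Here we read the paper's $Z^\top$ in the definition of $d^2$ as $Z$. The intercept $\gamma_j$ can be handled either by treating it as known or by noting that centering makes $Z$ orthogonal to $\mathbf 1$. In the latter case we profile out $\gamma_j$, and the score for $\beta$ is unchanged. Assumption \ref{assumption1} is what guarantees that the Hessian weights $\mu(1-\mu)$ are bounded below, and this justifies Assumption \ref{assumption4} with some $\kappa$ depending on $M_2$ and $\nu_{\min}(D)$.

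The remaining step, and the main obstacle, is the probabilistic control of the score, which has to be done at the right scale in each part.

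For part (a), $j$ is fixed and $\tfrac1n Z^\top(Y_{\cdot j}-\mu_j)$ is a $k$-vector of mean-zero sums of independent bounded terms, with covariance at most $\tfrac{1}{4n}D$. Hence its squared norm has expectation at most $\tfrac{\nu_{\max}k}{4n}$. Markov's inequality, or a Hoeffding bound coordinatewise, gives $\|\cdot\|_2=O_p(\sqrt{k/n})$. So $\lambda_j=C\sqrt{k/n}$ with $C$ large dominates the score with probability tending to one, and this gives $d^2=O_p(k/n)$.

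For part (b), we need $\max_j\|\tfrac1n Z^\top(Y_{\cdot j}-\mu_j)\|_2\le C_0\sqrt{\log q/n}$ simultaneously for all $j$. Here we cannot rely on Assumption \ref{assumption5} as stated, since it is per-$j$. Instead we prove the uniform bound directly. Each coordinate $\tfrac1n\sum_i Z_{i\ell}(Y_{ij}-\mu_{ij})$ is sub-Gaussian with variance proxy $\nu_{\max}/(4n)$, by Hoeffding with $\sum_i Z_{i\ell}^2=nD_{\ell\ell}$. A union bound over the $kq$ coordinates gives a maximum of order $\sqrt{k\log(kq)/n}$, which is $O(\sqrt{\log q/n})$ for fixed $k$. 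Taking $\lambda=C\sqrt{\log q/n}$ with $C$ exceeding twice this constant, the deterministic bound above holds for every $j$ on a single event of probability tending to one. Since $\kappa$ and $\nu_{\max}$ do not depend on $j$, this yields $\max_j d^2=O_p(\log q/n)$.

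Assumption \ref{assumption2} is not needed for the per-column argument; it only ensures the rate tends to zero.
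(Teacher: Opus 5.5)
Your proof is correct, but it takes a more explicit route than the paper's proof of Theorem 1. For part (a), the paper defers to an adaptation of Meier et al.\ (2008). It invokes Assumption 5 ``with rate $\sqrt{k/n}$'' and then passes to prediction error through the norm equivalence with $\tilde Z = ZD^{-1/2}$. Part (b) is dispatched by ``standard union bound arguments'' under the condition $C > C_0\sqrt{k/\log q}$.

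You instead carry out the single-group basic-inequality argument directly: optimality, Assumption 4, Cauchy--Schwarz, then $\Delta_j^\top D\Delta_j \le \nu_{\max}(D)\|\Delta_j\|_2^2$. This is the skeleton the paper only writes out later, in Steps 1--6 of its proof of Theorem 2, specialized to $U=0$, $\delta=0$. Your version is self-contained and fixes two soft spots in the paper's argument:
\begin{itemize}
\item Assumption 5 as stated gives a $\sqrt{\log q/n}$ bound, not the $\sqrt{k/n}$ bound used in (a). You obtain the latter directly from $\mathrm{Cov}\{\tfrac1n Z^\top(Y_{\cdot j}-\mu_j)\}\preceq D/(4n)$.
\item Assumption 5 is per-$j$, so it cannot deliver the simultaneous control that (b) needs. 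The paper's condition on $C$ only makes $\lambda$ exceed the fixed-$j$ score scale $C_0\sqrt{k/n}$. Your coordinatewise Hoeffding bound with variance proxy $D_{\ell\ell}/(4n)$, plus a union bound over the $kq$ coordinates, supplies the missing maximal inequality.
\end{itemize}
The paper's citation route buys only brevity. The multi-group machinery of Meier et al.\ is not needed when each regression has a single group.

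There are also some minor corrections.
\begin{itemize}
\item With $C$ fixed, Markov (or Hoeffding) only gives $P(\lambda_j \ge 2\|s_j\|_2) \ge 1-\nu_{\max}(D)/C^2$ for $s_j = \tfrac1n Z^\top(Y_{\cdot j}-\mu_j)$. That is not probability tending to one, and the union bound in (b) tends to one only as $q\to\infty$. This is harmless: your inequality $\tfrac{\kappa}{2}\|\Delta_j\|_2^2\le(\|s_j\|_2+\lambda)\|\Delta_j\|_2$ already yields $\|\Delta_j\|_2\le 2(\|s_j\|_2+\lambda)/\kappa$ unconditionally. So both $O_p$ rates hold for any $C>0$ without requiring the event $\lambda\ge 2\|s_j\|_2$.
\item Assumption 1 does not justify Assumption 4 as stated. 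The logistic loss is $1$-Lipschitz in the linear predictor, so the excess loss grows at most linearly in $\|\Delta_j\|_2$, and no quadratic lower bound can hold for all $\Delta_j\in\mathbb{R}^k$. Bounded predictors give curvature only on a bounded neighbourhood, which would need an extra localization step. Since you use Assumption 4 as given, the proof stands, but drop that remark.
\item Also drop the claim that $Z^\top\mathbf{1}=0$ leaves the $\beta$-score unchanged when $\gamma_j$ is profiled out. The weights $\mu_{ij}(1-\mu_{ij})$ vary with $i$, so the score does change. Treating $\gamma_j$ as known is the option that matches the paper's objective.
\end{itemize}
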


\begin{proof}
\label{proof1}
\textbf{Part (a): Individual regression rate.}

For a fixed $j$, by adapting the approach of \citet{Meier2008TheRegression} for a single regression with one group of size $k$, and using the score concentration from Assumption \ref{assumption5} with rate $\sqrt{k/n}$, the group lasso estimator with penalty $\lambda_j = C\sqrt{k/n}$ satisfies:
\[
d^2(\widehat \eta_j, \eta_j) = O_p\Big(\frac{k}{n}\Big).
\]

Because $D = n^{-1}Z^\top Z$ is diagonal with bounded eigenvalues, we have norm equivalence
\[
\nu_{\min}(D)\, \frac{1}{n}\|\tilde Z^\top(\widehat \beta_{\cdot j} - \beta_{\cdot j})\|_F^2 \leq d^2(\widehat \eta_j, \eta_j) \leq \nu_{\max}(D)\, \frac{1}{n}\|\tilde Z^\top(\widehat \beta_{\cdot j} - \beta_{\cdot j})\|_F^2 \; ,
\]
where $\tilde Z = ZD^{-1/2}$. Thus, the rate holds for $Z$ up to a constant depending on $D$.

\textbf{Part (b): Uniform control via union bound.}

To ensure the optimization error is controlled uniformly over all $q$ columns of $\beta$, we need the penalty to dominate the maximum score fluctuation across all $q$ regressions. By choosing $\lambda = C\sqrt{\log q/n}$ with $C$ sufficiently large (specifically, $C > C_0\sqrt{k/\log q}$), we obtain
\[
\max_{j=1,\dots,q} d^2(\widehat \eta_j, \eta_j) = O_p\Big(\frac{\log q}{n}\Big)
\]

via standard union bound arguments.
\end{proof}

\subsection{Proof of Theorem 2}
In practice, latent positions are unknown and must be estimated, introducing additive measurement error. The main difficulty of this theorem is the additional bias introduced by using $\hat{Z}$ instead of $Z$.

Let $\hat{Z} = Z + U$, where $Z$ are the oracle latent positions and $U$ is some unknown estimation error \citep{Srensen2018CovariateError, Srensen2015MeasurementCorrection}. The estimation-aware group lasso estimator for a single group is:
\begin{equation}
\widehat \beta_{\cdot j} \in \arg\min_{\beta_{\cdot j} \in \mathbb{R}^{k}} \Big\{ L_{Y_j}(\beta_{\cdot j}) +\ \lambda\,\|\beta_{\cdot j}\|_{2}\ +\ \frac{\delta}{2}\,\|\beta_{\cdot j}\|_{2}^2 \Big\} \; .
\end{equation}
Setting $\delta=0$ recovers the measurement-error-naive group lasso from the simplified case.

\begin{theorem}
\label{theorem2}
Under Assumptions \ref{assumption1}\textendash\ref{assumption6}:

(a) For any fixed $j \in \{1,\dots,q\}$, with tuning parameters $\lambda_j = C_1\sqrt{k/n}$ and $\delta = C_2 \delta_{\infty}$ where $C_1, C_2$ are positive constants:
\[\frac{1}{n}\|\hat{Z}(\hat{\beta}_{\cdot j} - \beta_{\cdot j})\|_2^2 = O_p\Big(\frac{1}{n} + \delta^2_\infty\Big).\]

(b) For uniform control over all $j \in \{1,\dots,q\}$, with tuning parameters $\lambda = C_1\sqrt{\log q/n}$ and $\delta = C_2 \delta_{\infty}$:
\[\max_{j=1,\dots,q} \frac{1}{n}\|\hat{Z}(\hat{\beta}_{\cdot j} - \beta_{\cdot j})\|_2^2 = O_p\Big(\frac{\log q}{n} + \delta^2_\infty\Big).\]

When $\delta_{\infty} = O(n^{-1/2})$ from the first stage estimation, both rates reduce to their respective statistical error rates.

\end{theorem}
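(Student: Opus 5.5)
The plan is the standard basic-inequality argument for penalized M-estimators, run with the design $\hat Z$ and with the gap between $\hat Z$ and $Z$ treated as a deterministic perturbation controlled by Assumption \ref{assumption6}. Fix $j$ and write $\Delta_j = \hat\beta_{\cdot j} - \beta_{\cdot j}$. Let $L_{Y_j}(\cdot;\hat Z)$ denote the logistic loss evaluated with design $\hat Z$, normalized by $1/n$. Optimality of $\hat\beta_{\cdot j}$ against the feasible point $\beta_{\cdot j}$ gives
\[
L_{Y_j}(\beta_{\cdot j}+\Delta_j;\hat Z) - L_{Y_j}(\beta_{\cdot j};\hat Z) \le \lambda\bigl(\|\beta_{\cdot j}\|_2 - \|\hat\beta_{\cdot j}\|_2\bigr) + \tfrac{\delta}{2}\bigl(\|\beta_{\cdot j}\|_2^2 - \|\hat\beta_{\cdot j}\|_2^2\bigr).
\]
I will subtract the linear term $\langle \nabla L_{Y_j}(\beta_{\cdot j};\hat Z),\Delta_j\rangle$ from both sides. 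Restricted strong convexity (Assumption \ref{assumption4}) is then invoked for the $\hat Z$-design. The justification is that the logistic Hessian weights $\mu(1-\mu)$ stay bounded below by Assumption \ref{assumption1}, possibly with $M_2$ enlarged by $\delta_\infty\|\beta_{\cdot j}\|_1$. Moreover $\frac1n\hat Z^\top\hat Z$ is within $O(\delta_\infty)$ of $D$ in operator norm, because $\|\frac1n Z^\top U\| \le \sqrt{\nu_{\max}}\,\delta_\infty\sqrt k$. So for $\delta_\infty$ small the curvature is bounded below by some $\kappa' >0$. The left side therefore dominates $\frac{\kappa'}{2}\|\Delta_j\|_2^2$, and this is comparable to $\frac1n\|\hat Z\Delta_j\|_2^2$ by the same eigenvalue perturbation.

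The key step is to decompose the score at the truth under the estimated design:
\[
\tfrac1n\hat Z^\top\{Y_{\cdot j}-\mu_j(\hat Z)\} = \tfrac1n Z^\top\{Y_{\cdot j}-\mu_j(Z)\} + \tfrac1n U^\top\{Y_{\cdot j}-\mu_j(Z)\} + \tfrac1n\hat Z^\top\{\mu_j(Z)-\mu_j(\hat Z)\}.
\]
Term (i) is handled by Assumption \ref{assumption5}. It is $O(\sqrt{\log q/n})$ uniformly over $j$, via a union bound exactly as in part (b) of Theorem \ref{theorem1}, and it is $O_p(\sqrt{k/n})$ for fixed $j$. For term (ii), the entries of $Y-\mu$ lie in $[-1,1]$, so a crude bound is $\sqrt k\,\delta_\infty$. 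Term (iii) is handled using that logit$^{-1}$ is $1/4$-Lipschitz: $|\mu_{ij}(Z)-\mu_{ij}(\hat Z)| \le \tfrac14|U_i^\top\beta_{\cdot j}| \le \tfrac14\delta_\infty\|\beta_{\cdot j}\|_1$. Combined with $\frac1n\|\hat Z\|$ bounded, this term is also $O(\delta_\infty)$. Since the intercept $\gamma_j$ enters both designs identically, it cancels out of the difference. In sum, the perturbation of the score has norm at most $c\,\delta_\infty$, with $c$ depending on $k$, $\nu_{\max}(D)$ and $\max_j\|\beta_{\cdot j}\|_2$. I expect bounding terms (ii) and (iii) to be the main obstacle. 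Term (iii) needs $\|\beta_{\cdot j}\|$ bounded, which is not explicitly assumed but follows from Assumption \ref{assumption1} together with Assumption \ref{assumption3}, since $\nu_{\min}\|\beta_{\cdot j}\|_2^2 \le \frac1n\|Z\beta_{\cdot j}\|^2 \le 4M_2^2$. It is at this point that $\delta = C_2\delta_\infty$ enters.

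To conclude, apply Cauchy–Schwarz to the inner product with $\Delta_j$. The triangle inequality gives $\lambda(\|\beta\|-\|\hat\beta\|) \le \lambda\|\Delta_j\|_2$ for the group penalty, since there is a single group. The ridge difference equals $-\delta\langle\beta_{\cdot j},\Delta_j\rangle - \frac\delta2\|\Delta_j\|^2 \le \delta\|\beta_{\cdot j}\|\,\|\Delta_j\|$. Putting these together,
\[
\tfrac{\kappa'}{2}\|\Delta_j\|_2^2 \le \bigl(C_0 r_n + c\,\delta_\infty + \lambda + C_2\delta_\infty\|\beta_{\cdot j}\|_2\bigr)\|\Delta_j\|_2,
\]
where $r_n = \sqrt{k/n}$ in part (a) and $r_n=\sqrt{\log q/n}$ in part (b). Dividing through gives $\|\Delta_j\|_2 \lesssim r_n+\delta_\infty$, because $\lambda\asymp r_n$. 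Squaring and using $\frac1n\|\hat Z\Delta_j\|^2 \le (\nu_{\max}+O(\delta_\infty))\|\Delta_j\|^2$ yields the rate $O_p(r_n^2+\delta_\infty^2)$. For fixed $k$ this is $O_p(1/n+\delta_\infty^2)$ in (a) and $O_p(\log q/n+\delta_\infty^2)$ in (b).

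The uniform statement in (b) requires that all constants above be free of $j$, and that the only probabilistic event is the one in Assumption \ref{assumption5} holding simultaneously for all $j$. That event is secured by the union bound from Theorem \ref{theorem1}(b). The final remark of the statement follows by substituting $\delta_\infty = O(n^{-1/2})$.
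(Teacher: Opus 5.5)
Your argument is correct in outline and has the same skeleton as the paper's proof. Both use a basic inequality, a quadratic curvature bound, and the oracle score via Assumption 5. Both control the measurement error by a perturbation of order $\delta_\infty\|\Delta_j\|_2$, built from a $U^\top(Y-\mu)$ piece and a $\tfrac14$-Lipschitz $Z^\top\{\mu_j(\hat Z)-\mu_j(Z)\}$ piece, and then use $\frac1n\|\hat Z\Delta_j\|_2^2\lesssim\|\Delta_j\|_2^2$. The difference is where you center the expansion.

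The paper expands on the oracle design $Z$, so Assumption 4 applies verbatim. It isolates the measurement error in the loss difference (III) and bounds it by a mean-value argument at an intermediate $\tilde\beta$ between $\beta_{\cdot j}$ and $\hat\beta_{\cdot j}$. The cost is that its term (B) needs $\|\tilde\beta\|_2$ bounded, which the paper justifies only by appealing to the rate being proved.

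You expand on $\hat Z$, so every perturbation term is a score evaluated at the true $\beta_{\cdot j}$. These terms are bounded deterministically, with constants free of $j$. You also derive $\|\beta_{\cdot j}\|_2\le 2M_2/\sqrt{\nu_{\min}(D)}$ from centering (which gives $|\gamma_j|\le M_2$) and Assumption 3. This fills a point the paper leaves unjustified and makes the $j$-uniformity needed in part (b) transparent.

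The price is that you need curvature for the $\hat Z$ design, which Assumption 4 does not supply. Your Hessian-weight argument gives this curvature only locally. Along the segment from $\beta_{\cdot j}$ to $\hat\beta_{\cdot j}$ the linear predictor contains $\hat Z_i\Delta_j$, so $\mu(1-\mu)$ is bounded below only once $\|\Delta_j\|_2$ is already controlled. In fact no quadratic lower bound valid for all $\Delta_j$ can hold for the logistic loss, whose Bregman divergence grows linearly; so Assumption 4 is really a local condition too.

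You should add the standard localization step:
\begin{itemize}
\item The penalized objective is convex, so if $\|\Delta_j\|_2>r$, then $t\Delta_j$ with $t=r/\|\Delta_j\|_2$ also satisfies the basic inequality.
\item Your local bound then applies on the ball of radius $r$. It is valid when $\max_i\|Z_i\|_2$ is bounded, which follows from Assumption 1 for $\Theta^A$ with $i=i'$ together with centering.
\item This yields a contradiction for large $n$ and small $\delta_\infty$.
\end{itemize}

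Alternatively, if you take Assumption 4 at face value, transfer it from $Z$ to $\hat Z$. Set $h(b)=L_{Y_j}(b;\hat Z)-L_{Y_j}(b;Z)$. The two Bregman divergences then differ by $\langle\nabla h(\tilde\beta)-\nabla h(\beta_{\cdot j}),\Delta_j\rangle$, which is exactly the paper's term (III) computation. It gives the lower bound $(\kappa/2-c\,\delta_\infty)\|\Delta_j\|_2^2-c\,\delta_\infty\|\Delta_j\|_2$, which is enough for your final inequality.

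One small normalization slip: in term (iii) the bounded quantity is $n^{-1/2}\|\hat Z\|_{op}$, paired with $\|\mu_j(Z)-\mu_j(\hat Z)\|_2\le\tfrac14\sqrt n\,\delta_\infty\|\beta_{\cdot j}\|_1$.
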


\begin{proof}
\label{proof2}
We begin by proving part (a) and then prove part (b) with several additional steps.

Let $\Delta_j = \widehat{\beta}_{\cdot j} - \beta_{\cdot j}$ denote the estimation error for covariate $j$. Since $k$ is fixed, we treat all factors involving $k$ as constants throughout. We denote $\| \cdot \|_{op}$ as the operator norm of a matrix.

\textbf{Step 1: Basic Inequality from Optimality.}
By the optimality of $\widehat{\beta}_{\cdot j}$, we have
\[
L_{Y_j}(\widehat{\beta}_{\cdot j}; \hat{Z}) + \lambda\sqrt{k}\|\widehat{\beta}_{\cdot j}\|_2 + \frac{\delta}{2}\|\widehat{\beta}_{\cdot j}\|_2^2 
\leq L_{Y_j}(\beta_{\cdot j}; \hat{Z}) + \lambda\sqrt{k}\|\beta_{\cdot j}\|_2 + \frac{\delta}{2}\|\beta_{\cdot j}\|_2^2.
\]

Rearranging, we find:
\begin{align}
L_{Y_j}(\widehat{\beta}_{\cdot j}; \hat{Z}) - L_{Y_j}(\beta_{\cdot j}; \hat{Z}) 
&\leq \lambda\sqrt{k}(\|\beta_{\cdot j}\|_2 - \|\widehat{\beta}_{\cdot j}\|_2) + \frac{\delta}{2}(\|\beta_{\cdot j}\|_2^2 - \|\widehat{\beta}_{\cdot j}\|_2^2) \nonumber \\
&\leq \lambda\sqrt{k}\|\Delta_j\|_2 + \frac{\delta}{2}(\|\beta_{\cdot j}\|_2^2 - \|\widehat{\beta}_{\cdot j}\|_2^2).
\label{eq:basic-ineq-fixed}
\end{align}

\textbf{Step 2: Decomposition via True Design $Z$.}
Adding and subtracting $L_{Y_j}(\widehat{\beta}_{\cdot j}; Z)$ and $L_{Y_j}(\beta_{\cdot j}; Z)$:
\begin{align}
L_{Y_j}(\widehat{\beta}_{\cdot j}; \hat{Z}) - L_{Y_j}(\beta_{\cdot j}; \hat{Z}) 
&= L_{Y_j}(\widehat{\beta}_{\cdot j}; \hat{Z}) -L_{Y_j}(\widehat{\beta}_{\cdot j}; Z) + L_{Y_j}(\widehat{\beta}_{\cdot j}; Z) \\ 
&\quad  - L_{Y_j}(\beta_{\cdot j}; Z) + L_{Y_j}(\beta_{\cdot j}; Z)- L_{Y_j}(\beta_{\cdot j}; \hat{Z}) \;.
\end{align}

These terms can be rearranged: 
\begin{align}
L_{Y_j}(\widehat{\beta}_{\cdot j}; \hat{Z}) - L_{Y_j}(\beta_{\cdot j}; \hat{Z}) 
&= \underbrace{L_{Y_j}(\widehat{\beta}_{\cdot j}; Z) - L_{Y_j}(\beta_{\cdot j}; Z)}_{\text{Loss on true design Z}} \\
&\quad + \underbrace{ L_{Y_j}(\widehat{\beta}_{\cdot j}; \hat{Z}) -L_{Y_j}(\widehat{\beta}_{\cdot j}; Z) + L_{Y_j}(\beta_{\cdot j}; Z)- L_{Y_j}(\beta_{\cdot j}; \hat{Z}) }_{\text{Effect of measurement error}} \;.
\end{align}

The loss on the true design, $Z$, can be separated into an expression involving the score function $\nabla L_{Y_j}(\beta_{\cdot j};Z)$ and the Hessian $\nabla^2 L_{Y_j}(\beta_{\cdot j};Z)$.

First, we can use a Taylor expansion on $L_{Y_j}(\widehat{\beta}_{\cdot j};Z)$:

\begin{align}
L_{Y_j}(\widehat{\beta}_{\cdot j};Z) &= L_{Y_j}(\beta_{\cdot j};Z) + \langle \nabla L_{Y_j}(\beta_{\cdot j};Z), \Delta_j \rangle + \frac{1}{2}\nabla^2 L_{Y_j}(\tilde{\beta}_{\cdot j};Z)
\end{align}
for some $\tilde{\beta}_{\cdot j}$ between $\widehat{\beta}_{\cdot j}$ and $\beta_{\cdot j}$. We can plug this in to the expression for the loss on the true design:

\begin{align}
L_{Y_j}(\widehat{\beta}_{\cdot j}; Z) - L_{Y_j}(\beta_{\cdot j}; Z) &=  \langle \nabla L_{Y_j}(\beta_{\cdot j};Z), \Delta_j \rangle + L_{Y_j}(\widehat{\beta}_{\cdot j}; Z) - L_{Y_j}(\beta_{\cdot j}; Z) - \langle \nabla L_{Y_j}(\beta_{\cdot j};Z), \Delta_j \rangle \;.
\end{align}

Finally, we have

\begin{align}
L_{Y_j}(\widehat{\beta}_{\cdot j}; \hat{Z}) - L_{Y_j}(\beta_{\cdot j}; \hat{Z}) 
&= \underbrace{L_{Y_j}(\widehat{\beta}_{\cdot j}; Z) - L_{Y_j}(\beta_{\cdot j}; Z) - \langle \nabla L_{Y_j}(\beta_{\cdot j}; Z), \Delta_j \rangle}_{\text{(I): Curvature term}} \nonumber \\
&\quad + \underbrace{\langle \nabla L_{Y_j}(\beta_{\cdot j}; Z), \Delta_j \rangle}_{\text{(II): Score term}} \nonumber \\
&\quad + \underbrace{L_{Y_j}(\widehat{\beta}_{\cdot j}; \hat{Z}) - L_{Y_j}(\widehat{\beta}_{\cdot j}; Z) + L_{Y_j}(\beta_{\cdot j}; Z) - L_{Y_j}(\beta_{\cdot j}; \hat{Z})}_{\text{(III): Measurement error bias}}.
\end{align}

\textbf{Step 3: Lower Bound on Curvature Term (I).}
By Assumption \ref{assumption4} (Restricted Strong Convexity):
\[
\text{(I)} \geq \frac{n\kappa}{2}\|\Delta_j\|_2^2.
\]

\textbf{Step 4: Bound on Score Term (II).}
By Assumption \ref{assumption5} (Score Concentration):
\[
|\text{(II)}| = |\langle Z^\top(Y_{\cdot j} - \mu_j(Z)), \Delta_j \rangle| \leq n C_0\sqrt{\frac{k}{n}}\|\Delta_j\|_2 = C_0 \sqrt{nk} \| \Delta_j\|_2 \;.
\]

\textbf{Step 5: Bound on Measurement Error Bias Term (III).}
To understand how to bound term (III), first we replace $\hat{Z} = Z + U$:

\begin{align}
L_{Y_j}(\widehat{\beta}_{\cdot j}; \hat{Z}) - L_{Y_j}(\widehat{\beta}_{\cdot j}; Z) 
& = L_{Y_j}(\widehat{\beta}_{\cdot j}; Z + U) - L_{Y_j}(\widehat{\beta}_{\cdot j}; Z)  = g(\widehat{\beta}_{\cdot j})  \;,
\\
L_{Y_j}(\beta_{\cdot j}; Z) - L_{Y_j}(\beta_{\cdot j}; \hat{Z}) &= L_{Y_j}(\beta_{\cdot j}; Z) - L_{Y_j}(\beta_{\cdot j}; Z + U)  = g(\beta_{\cdot j})  \;.
\end{align}

By mean value theorem, there exists $\tilde{\beta}$ on the line segment between $\beta_{\cdot j}$ and $\widehat{\beta}_{\cdot j}$ such that

\begin{align}
g(\widehat{\beta}_{\cdot j}) - g(\beta_{\cdot j}) = \langle \nabla_\beta g(\tilde{\beta}_{\cdot j}), \Delta \rangle \;.
\end{align}

The gradient with respect to $\beta$ is 

\[\nabla_\beta L(\beta, Z) = Z^\top (\mu(\beta, Z) - Y) \;,\]

where $\mu(\cdot)$ is the logit link. Thus,

\begin{align}
    \nabla_\beta g(\tilde{\beta}_{\cdot j}) &= \nabla_\beta L(\tilde{\beta}, \hat{Z}) - \nabla_\beta L(\tilde{\beta}, Z) \\
    &= \hat{Z}^\top (\mu(\tilde{\beta}, \hat{Z}) - Y) - Z^\top (\mu(\tilde{\beta}, Z) - Y) \\
    &=(Z + U)^\top (\mu(\tilde{\beta}, \hat{Z}) - Y) - Z^\top (\mu(\tilde{\beta}, Z) - Y) \\
    &= U^\top (\mu(\tilde{\beta}, \hat{Z}) - Y) - Z^\top (\mu(\tilde{\beta}, \hat{Z}) - \mu(\tilde{\beta}, Z)) \;,
\end{align}

and we have

\begin{align}
    \langle \nabla_\beta g(\tilde{\beta}_{\cdot j}), \Delta \rangle &= \underbrace{-\langle U^\top (Y - \mu(\tilde{\beta}, \hat{Z})), \Delta \rangle}_{\text{(A)}} - \underbrace{\langle Z^\top (\mu(\tilde{\beta}, \hat{Z}) - \mu(\tilde{\beta}, Z)), \Delta_j \rangle}_{\text{(B)}}  \;.
\end{align}

\textbf{Step 5.1: Bound term (A)}  
For simplicity, we can split terms (A) and (B) to bound them separately.
\begin{align}
(A) = |-\langle U^\top (Y - \mu(\tilde{\beta}, \hat{Z})), \Delta \rangle| & \leq \| U^\top (Y - \mu(\tilde{\beta}, \hat{Z}))\|_2 \; \|\Delta \|_2
\end{align}

Every $Y - \mu(\tilde{\beta}, \hat{Z}) \in [-1, 1]$, so $\|Y - \mu(\tilde{\beta}, \hat{Z})\|_2 \leq \sqrt{n}$. By assumption \ref{assumption6}, we have $\|U\|_\infty \leq \delta_\infty$. Therefore, each column of $U$ has an L2 norm $\leq \sqrt{n} \delta_\infty$ and $\|U\|_{op} \leq \sqrt{n}k \delta_{\infty}$.

Plugging this in to the expression above, we obtain
\begin{align}
    |-\langle U^\top (Y - \mu(\tilde{\beta}, \hat{Z})), \Delta \rangle| & \leq \sqrt{nk} \delta_{\infty} \cdot \sqrt{n} \cdot \|\Delta \|_2 \\
    &= n \sqrt{k}\delta_\infty \|\Delta\|_2  \;.
\end{align}

\textbf{Step 5.2: Bound term (B):}
No focusing on (B):
$(B) = \langle Z^\top (\mu(\tilde{\beta}, \hat{Z}) - \mu(\tilde{\beta}, Z)), \Delta_j \rangle$. 

We can linearize the change in the mean vector $\mu_j$ with respect to the linear predictor:
\begin{align}
    \eta(Z) &= \gamma_j \mathbf{1}_n + Z \tilde{\beta} \;,\\
    \eta(\hat{Z}) &= \gamma_j \mathbf{1}_n + \hat{Z} \tilde{\beta} = \eta(Z) + U\tilde{\beta} \;.
\end{align}

The logit link is differentiable with derivative $\mu'(\xi_i) \in (0, 1/4]$. By the mean value theorem,

\[\mu_j(\tilde{\beta}; \hat{Z}) - \mu_j(\tilde{\beta};Z) = W(U \tilde{B}) \;, \]

where $W = \text{diag}(\mu'_1, \dots \mu'_n)$ and $\|W\|_{op} \leq 1/4$.

By Cauchy-Schwarz, 

\begin{align}
    |(B)| &\leq \|Z^\top W U \|_{op} \|\tilde{\beta} \|_2 \|\Delta_j\|_2  \;.
\end{align}

Given $\| W \|_{op} \leq 1/4 $:

\begin{align}
    \|Z^\top W U \|_{op} \leq \|Z\|_{op} \|W\|_{op} \|U\|_{op} \leq \frac{1}{4}\|Z\|_{op} \|U\|_{op} \;.
\end{align}

We can bound $\|Z\|_{op}$ in terms of $D = \frac{1}{n} Z^\top Z$: for any $v$, $\frac{1}{n}\| Zv \|^2_2 = v^\top D v \leq \nu_{\max}(D) \|v\|^2_2$ :
\[\|Z\|_{op} \leq \sqrt{n \nu_{\max}(D)} \; .\]

With $\|U\|_{op} \leq \sqrt{nk} \delta_\infty$ as above, this can be combined: 

\[\|Z^\top WU\|_{op} \leq \frac{1}{4}\sqrt{n \nu_{\max} (D)} \sqrt{nk}\delta_\infty = \frac{n}{4} \sqrt{k \nu_{\max}(D)} \delta_{\infty} \;.\]

Since $k$ is fixed, $\| \tilde{\beta }\|$ is bounded, and $\| \Delta_j \|_2$ will be small at the target rate, 

\[|(B)| \leq C_B n \delta_\infty \| \Delta_j  \|_2\]

for some constant $C_B$.

Combining terms (A) and (B),
\begin{align}
|\text{(III)}|
&\leq C_3 n\delta_{\infty}\|\Delta_j\|_2 \:,
\end{align}

where $C_3$ is a constant depending on $k$ and $\nu_{\max}(D)$.

\textbf{Step 6: Combining Bounds.}
Substituting Steps 3--5 into the inequality from Step 1 and dividing by $n$:
\begin{align}
\frac{\kappa}{2}\|\Delta_j\|_2^2 &\leq C_0\sqrt{\frac{k}{n}}\|\Delta_j\|_2 + C_3\delta_{\infty}\|\Delta_j\|_2 \nonumber \\
&\quad + \frac{\lambda\sqrt{k}}{n}\|\Delta_j\|_2 + \frac{\delta}{2n}(\|\beta_{\cdot j}\|_2^2 - \|\widehat{\beta}_{\cdot j}\|_2^2).
\end{align}

Using $\|\beta_{\cdot j}\|_2^2 - \|\widehat{\beta}_{\cdot j}\|_2^2 \leq 2\|\beta_{\cdot j}\|_2\|\Delta_j\|_2$, our choice of $\lambda_j = C_1\sqrt{k/n}$, and $\delta = C_2\delta_{\infty}$ (with constants absorbing factors of $k$):
\[
\frac{\kappa}{2}\|\Delta_j\|_2^2 \leq C_4\sqrt{\frac{1}{n}}\|\Delta_j\|_2 + C_5\delta_{\infty}\|\Delta_j\|_2,
\]
where $C_4$ and $C_5$ are constants (depending on $k$, $C_0$, $C_1$, $C_2$, $C_3$, and $\|\beta_{\cdot j}\|_2$).

If $\Delta_j = 0$, we are done. Otherwise, dividing by $\|\Delta_j\|_2$ and solving:
\[
\|\Delta_j\|_2 = O\left(\sqrt{\frac{1}{n}} + \delta_{\infty}\right).
\]

Thus, for any fixed $j$ with appropriate choice of $\lambda_j$:
\[
\|\Delta_j\|_2^2 = O\left(\frac{1}{n} + \delta_{\infty}^2\right).
\]

\textbf{Step 7: Prediction Error on $\hat{Z}$.}
Finally, we bound the prediction error on the estimated design:
\begin{align}
\frac{1}{n}\|\hat{Z}\Delta_j\|_2^2 &= \frac{1}{n}\|(Z + U)\Delta_j\|_2^2 \nonumber \\
&\leq \frac{2}{n}\|Z\Delta_j\|_2^2 + \frac{2}{n}\|U\Delta_j\|_2^2 \nonumber \\
&\leq 2\nu_{\max}(D)\|\Delta_j\|_2^2 + 2C_6\delta_{\infty}^2\|\Delta_j\|_2^2,
\end{align}
where $C_6$ is a constant depending on $k$, using the bound
\[
\frac{1}{n}\|U\Delta_j\|_2^2 \leq \frac{1}{n} \cdot n k \delta_{\infty}^2 \|\Delta_j\|_2^2 = k\delta_{\infty}^2\|\Delta_j\|_2^2 = O(\delta_{\infty}^2\|\Delta_j\|_2^2).
\]

Combining with the bound on $\|\Delta_j\|_2^2$ from Step 6:
\[
\frac{1}{n}\|\hat{Z}\Delta_j\|_2^2 = O_p\left(\frac{1}{n} + \delta_{\infty}^2\right).
\]

This proves the rate in part (a). To prove part (b), we need the following additional steps:

\textbf{Step 7: Uniform Control}
The above analysis holds with high probability for any fixed $j$. To obtain uniform control, note that the score terms behave as independent sub-Gaussian random variables. Therefore, we require $\lambda = C_1 \sqrt{\frac{\log q}{n}}$ to control all $q$ regressions simultaneously. 

With this larger penalty, the uniform bound becomes:

\[
\max_{j = 1, \dots,q} \| \Delta\|_2^2 = O_p \big( \frac{\log q}{n} + \delta^2_{\infty} \big ) \;.
\]

When $\delta_{\infty} = O(n^{-1/2})$ from first-stage estimation:
\[
\frac{1}{n}\|\hat{Z}\Delta_j\|_2^2 = O_p\left(\frac{\log q}{n} + \frac{1}{n}\right) = O_p\left(\frac{\log q}{n}\right),
\]
since $\log q \gg 1$ for high-dimensional settings.

\end{proof}

\end{document}